\def\thanks#1{\protected@xdef\@thanks{\@thanks  \protect\footnotetext{#1}}}
\setlist{noitemsep}
\newcommand{\ket}{\rangle}
\renewcommand{\sample}{\xleftarrow{\$}}
\newtheorem{theorem}{Theorem}
\newtheorem{lemma}{Lemma}
\newtheorem{remark}{Remark}
\newtheorem{definition}{Definition}
\newtheorem{proposition}{Proposition}
\title{Selectively Blind Quantum Computation}
 \date{}
\author[1,$*$]{Abbas Poshtvan\thanks{$^*$These authors contributed equally.}}
\author[2,$*,\dagger$]{Oleksandra Lapiha\thanks{$^{\dagger}$Work mostly done while at LIP6, Sorbonne Université.}}
\author[1]{Mina Doosti}
\author[1]{Dominik Leichtle}
\author[3]{Luka Music}
\author[1,4]{Elham Kashefi}
\affil[1]{\small School of Informatics, University of Edinburgh, UK}
\affil[2]{\small Royal Holloway, University of London, UK}
\affil[3]{\small Quandela, Massy, France}
\affil[4]{\small LIP6, CNRS, Sorbonne Université, Paris, France}
\begin{document}

\maketitle

\begin{abstract}
Known protocols for the secure delegation of quantum computations from a client to a server in an information-theoretic setting require quantum communication. In this work, we investigate methods to reduce the communication overhead. First, we establish an impossibility result by proving that server-side local processes cannot decrease the quantum communication requirements of secure delegation protocols. We develop a series of no-go results that prohibit such a process within an information-theoretic framework. 

Second, we present a possibility result by introducing Selectively Blind Quantum Computing (SBQC), a novel functionality which allows the client to hide one among a known set of possible computations. We precisely characterise how the differences between the computations in the protected set influence the number of qubits sent during our SBQC implementation, therefore yielding a communication-optimal protocol. This approach can reduce qubit communication drastically and demonstrates the trade-off between information leaked to the server and the protocol's communication cost. 
\end{abstract}

\section{Introduction}

\subsection{Context}

Currently, in order to delegate a quantum computation to a quantum service provider, a client sends over the input and the computation they want to perform, and the quantum server is supposed to do the computation and send back the outcome. For some important computations, the client may want the following privacy and security guarantees: the input, computation, and the output should stay unknown to the server (blindness); furthermore, the client should be able to check that the computation has been performed correctly (verifiability). Delegating quantum computations in this secure way to a powerful server is a crucial step toward making quantum computing widely accessible and establishing a secure quantum cloud. Hence, improving the efficiency and practicality of these functionalities together with the fundamental limitations on them has been a long-lasting topic of interest in quantum cryptography, complexity theory and physics~\cite{broadbent2009universal, drmota2024verifiable, childs2001secure, fitzsimons2017unconditionally, mahadev2018classical, reichardt2013classical,badertscher2020security,barz2013experimental,barz2012demonstration,mantri2017flow,polacchi2023multi,perez2015iterated}. In this manuscript, we focus on the blindness property only.

The Universal Blind Quantum Computing (UBQC) protocol proposed by \cite{broadbent2009universal} achieves this functionality provided that the client can generate random single qubits and send them over a quantum channel to the server. However, the number of qubits that must be sent in this protocol (i.e. its communication complexity) is linear in the size of the circuit. From a practical point of view, the flying qubit from client has to be integrated into the remote quantum computer of the server. In the process, it might be affected by the noise of the channel established between the parties, which results in a reduction of the computation performance on the server's side. The main question addressed in this paper is whether techniques can be used to reduce this communication complexity and also establish lower bounds on the number of encrypted qubits necessary to achieve blindness.

There are three existing approaches for constructing the blind quantum computation functionality, each using a specific resource: 
\begin{enumerate}
    \item Blind quantum computing with Prepare-and-Send \cite{broadbent2009universal}. This approach has been formulated in the measurement based quantum computation (MBQC) model \cite{raussendorf2001one}. Here, the client is able to apply random single qubit rotations and sends multiple encrypted qubits to the server (where the keys are the rotation angles). The server uses them to construct an encrypted graph state. The computation is performed interactively with the client sending measurement instructions and the server returning the measurement outcomes for subsets of the graph's qubits. At the end, the server sends back the output qubits to the client. The main resource for this variant of UBQC is the ability to prepare single qubits on the server's machine without the server knowing the state of these qubits, also called secure remote state preparation (RSP). This is the main variant of UBQC we will be considering throughout this paper, and will discuss it in more details later. A closely related alternative approach are Receive-and-Measure protocols \cite{morimae2013blind, childs2001secure}, in which the server prepares the computation graph state and sends it qubit-by-qubit to the client, who has the ability of measuring these qubits in a certain number of bases.

    \item  Blind quantum computing with non-communicating entangled servers \cite{reichardt2013classical}. This device-independent approach requires two non-communicating servers to share maximally entangled states with each other prior to the protocol's execution. Then the client instructs one of the servers to measure their part of this resource state and send back the outcome without the other server learning it. This information asymmetry is leveraged by the client to both perform the computation and test the servers' honesty by utilizing the rigidity of non local games \cite{palazuelos2016survey}. However, sharing high amounts of entanglement between distant servers and making sure that the severs are not communicating (e.g. by enforcing space-like separations so the parties can not have causal effects on each other) are the downsides of this approach.

    \item Blind quantum computing with computational hardness assumptions and classical client \cite{mahadev2020classical}. The idea behind this approach is to perform secure remote state preparations using only classical public-key cryptography \cite{mahadev2018classical,cojocaru2019qfactory}, in particular Trapdoor Claw-Free Functions (TCF) based only the Learning with Errors problem. In this approach, quantum communication is removed entirely, at the cost of having to implement the TCFs complex and deep circuits on the server's quantum hardware. Compared to the other approaches which are information-theoretically secure, these protocols provide only computational security.
\end{enumerate}

Among these approaches, the first one seems closer to implementation, as it has already been demonstrated in the lab ~\cite{barz2012demonstration}. From a practical perspective, the main obstacle here is the quantum communication cost which is equal to the size of the graph state implemented on the server (i.e. proportional to the number of gates in the circuit that the client wants to run). Reducing the communication cost and computing its fundamental lower bounds is an important step toward making this approach more feasible.

Apart from the experimental significance of achieving UBQC with reduced qubit communication, the possibility of accomplishing this with minimal resources has important implications for complexity theory. The first question in this direction is whether it is possible to make the client fully classical while preserving information-theoretic security. As discussed, this can be done with computational security using TCFs. However, achieving blindness with a fully classical client in an information-theoretic context is unlikely, as it would lead to unlikely collapses in complexity classes. As shown in \cite{aaronson2017implausibility}, delegating Boson Sampling in such a way would imply the collapse of the polynomial hierarchy to the third level. This relates to the foundational question of verifying BQP computations with BPP verifiers, which remains an open question in the field. Therefore, studying the communication requirements of the \cite{broadbent2009universal} protocol can shed light into possible answers to this question.

An important question in that direction is: \emph{Can we construct more efficient RSPs with less quantum communication?} This question was positively addressed in the Random Oracle Model \cite{zhang2021succinct}. They demonstrate that UBQC can be done by sharing keys encrypted in quantum states, with a communication complexity that scales polynomially with the security parameter and remains independent of the computation's size. To this low number of keyed states, they apply a gadget-increasing protocol to generate enough encrypted states to perform the client's computation blindly. However, while the Random Oracle Model is a nice theoretical model, it is not practical and similar protocols based on concrete instantiations have not yet been developed.

From another point of view, for many practical applications, hiding all aspects of the computation, i.e. the graph, the input, the algorithm, etc, might be unnecessary. For example, in the federated machine learning protocol that uses delegated quantum computing \cite{li2021quantum}, the only sensitive parameters are the weights of a run, and the other information about their circuit can be publicly revealed. For many cases, hiding only the input and some other qubits affected by their measurement outcome suffices. In that case, it would be sufficient to hide only the input and its propagation through the circuit \cite{caro2024interactive}.

\subsection{Overview of the Main Results}

\paragraph{Questions addressed.} 
A more comprehensive understanding of the minimal resources needed for blind quantum computations can be obtained from the answers to the following open questions:
\begin{enumerate}
\item For now, one RSP call generates a single keyed quantum state. Ideally, we would like the server to be able to expand this state into multiple encrypted states as in \cite{zhang2021succinct}. \emph{Is it possible to design more efficient RSPs from similar server-side processes in the plain model?}\footnote{Note that this new functionality is not equivalent to entropy expansion, but a distribution or splitting process} This would then imply that a run of the UBQC protocol requires less quantum communication for while preserving the same information-theoretic security.
\item The full UBQC protocol perfectly hides all the computation angles in the graph, the flow, the inputs and the outputs -- full blindness. However, perhaps there are settings in which the client does not care about the secrecy of some of these values. \emph{Is the full UBQC protocol still necessary and, if not, what is the minimal amount of quantum communication that is required to selectively blind a subset of these parameters?}
\end{enumerate}

\paragraph{Impossibility results -- Section \ref{sec:impossibility}.} In the first part of our work, we prove a sequence of no-go results which rule out some scenarios for reducing communication complexity of a protocol for implementing UBQC with full blindness.

More specifically, we study the existence of a process that allows the RSP process to be bootstrapped: implemented on the server's machine, it takes one keyed qubit as input and outputs two (or more) keyed qubits with different keys. We prove in the information theoretic setting that such a functionality, either as an isometry \ref{lemma:no-sep-gadget-d}\ref{sec:alternative-proof-lemma-1} or as a general quantum channel \ref{lemma:no-dist-gadget-channel}, can not be achieved due to the basic laws of quantum mechanics, i.e. linearity. We also consider the foundational aspect of this result and show that it can not be reduced to the no-cloning theorem. In lemma \ref{lemma:no-sep-gadget-d} We also show that it is impossible to construct a protocol that has the same security guarantees as the original UBQC by only having access to a process which replicates some but not all of the encrypted states.
 
Next, we turn our attention to variants of the the UBQC protocol of \cite{broadbent2009universal} which rely on entangled states or a more limited set of states than the original one. We again show the impossibility of amplifying such resource states to decrease the communication complexity. We rule out the entangled version by analyzing the symmetry of bipartite entangled systems in quantum mechanics -- the equality of the Von Neumann entropy of subsystems \ref{lemma:no-dist-entangled}. This result fundamentally relies on the fact that entanglement cannot be increased in a bipartite system by local operations.  

We then show that relaxing the requirements of the amplification processes described above is also not enough to yield a secure UBQC protocol. Indeed, we exhibit explicit differential attacks on a protocol makes use of amplifiers which can approximately prepare the correct states, i.e. states that have correlated parameters in the classical description of each subsystem, but the correlation is supposed to be minimal \ref{lemma:no-go-correlated-angles}. All these results are encompassed in \ref{prop:ubqc-limitation}.    

Finally, we prove the existence of a lower bound on the amount of quantum communication needed for blind quantum computing protocols which hide the same information as the original UBQC protocol: the number of unknown states sent to the server (or more generally, the number of unknown independent parameters) must be superlogarithmic in the size of the computation \ref{lemma:ubqc-lower-bound}.


\paragraph{Possibility results -- Section \ref{sec:selectively-blind-quantum-computing}.} In the second part of the paper, guided by the impossibility results above, we focus on optimizing the UBQC protocol by going back and analyzing the functionality that it achieves. We introduce a more general blind functionality (Resource \ref{res:ubqc short}) which we call Selectively Blind Quantum Computing (SBQC):  \emph{It allows the client to delegate one computation from a known set of unitaries to the server while hiding the choice itself.} UBQC then corresponds to the subcase of SBQC when the set of unitaries contains all the computations that are possible on a given graph. We then show that we only need to mask the differences between computational paths rather than the entire computation. SBQC demonstrates its full advantage when unitary operations differ minimally — for instance, by just one or a few gates. The protocol uses two main techniques:

\emph{Section \ref{sec:angle_masking} -- Angle masking:} We study here the hiding cost generated by a node in the computation graph which has different default measurement angles in the known computations. This is done by first defining the notion of a \emph{future cone} (Definition \ref{def:future_cone}) of a given node: all the nodes in the graph which are affected by -- i.e. which can get corrections directly or indirectly from -- the measurement outcome of that node. We then show that if only the measurement angles of some nodes are different between the graphs and all the other properties of the computations are the same, it is sufficient to hide with quantum communication (i) those specific nodes and (ii) the nodes in their future cones which have non Clifford measurement angles, i.e. odd multiples of $\pi/4$.

\emph{Section \ref{sec:graph_masking} Graph masking (Merge-and-Break):} If the computation graphs have different shapes, we give conditions for finding a \emph{merger graph} (Definition \ref{def:merger_graph} and Lemma \ref{lem:merger_graph}) that contains all computational graphs. We then introduce methods to decompose the graph into the desired configuration while concealing its shape, size, and flow. We achieve this by leveraging bridge-and-break operations \cite{MPKK18}, alongside established techniques from Measurement-Based Quantum Computing \cite{briegel2009measurement}.

We then combine these two techniques into a single full protocol for SBQC for delegating one out of two computations (Protocols \ref{prot:full masking cp} and \ref{prot:full masking pp}). This protocol can be easily generalized to arbitrary sets of computations by applying similar rules to combine all the graphs and measurement angles one-by-one. We finally prove that it is information-theoretic secure in the composable Abstract Cryptography framework (Theorem \ref{thm:blind}), meaning that it remains secure even if run sequentially or concurrently with other protocols.

Thus, by relaxing full universal blindness to SBQC, we significantly reduce communication overhead while maintaining robust information-theoretic security. Applying SBQC to practical scenarios such as algorithm auditing, servers can validate algorithmic structures without accessing hidden user data or selected computational choices. This approach effectively bridges theoretical UBQC with practical deployment within emerging distributed quantum computing architectures.

\paragraph{Note.} We note that a result similar to our angle masking technique, stating that hiding only non-Clifford gates is sufficient, has been previously known in the circuit model \cite{broadbent2015delegating}. We have rediscovered this fact within the framework of Measurement-Based Quantum Computation and utilize it as a subroutine to build a complete protocol, SBQC. \\ 
Additionally, a somewhat similar objective to SBQC was recently pursued in \cite{lee2025} to create a protocol that allows for concealing specific or sensitive parts of a quantum circuit. This is achieved by employing quantum fully homomorphic encryption (QFHE) at the input and output, while also hiding the sensitive portion of the circuit, such as the oracle in the Grover algorithm’s search function. While both approaches share a goal of selective concealment, our SBQC protocol employs entirely distinct techniques. Furthermore, our work provides a clearly defined functionality for this selective hiding and includes a comprehensive analysis of the communication cost increases that may arise from variations among the unitaries in the publicly known set, offering a detailed perspective.

\section{Preliminaries}
\label{sec:prelim}

In this section we briefly introduce the notions and technical tools that we will use throughout the paper.

\subsection{Notations}

We list the necessary Quantum Computing notations we use in this paper. The bra-ket notation is used to represent pure quantum states, such as $|\psi \ket$. The symbol $|0\ket$ represents the quantum state equal to $\binom{1}{0}$ in the vector notation. Then $|1\ket$ equals $\binom{0}{1}$ and $\alpha|0\ket + \beta|1\ket$ where $\alpha, \beta \in \mathbb{C}$ and $|\alpha|^{2} + |\beta|^2 = 1$ equals $\binom{\alpha}{\beta}$ in the vector notation. Specifically, we widely use the following quantum state parametrised by an angle $\theta$ as: $|\pm_\theta\ket = \frac{1}{\sqrt{2}}(|0\ket \pm e^{i\theta}|1\ket)$. We also use $\rho$ to represent the density matrix of an arbitrary qubit. Quantum systems are composed via tensor products. We use $\rho ^{\otimes n}$ to show the tensor product of $n$ copies of $\rho$


A unitary operation describes the reversible evolution of quantum systems. A unitary matrix $U:L(\mathbb{C}^d) \longrightarrow L(\mathbb{C}^d) $ is a linear operator where the input and output have the same dimension and $U  U^{\dagger}= U^{\dagger}  U=\mathbb{I} $. Also, an isometry $D$ in this manuscript refers to an operator which is inner product preserving, but in a more general sense than unitary operations. In an isometry $D:L(\mathbb{C}^{d_1}) \longrightarrow L(\mathbb{C}^{d_2})$ the input and output dimensions are different and we have $D^{\dagger} D =\mathbb{I}$. In other words, isometry is a generalization of a unitary where you can tensor product your input state with another state and perform a unitary on the whole system. The most general mathematical model for describing a quantum process is a quantum channel, $\Lambda:L(\mathbb{C}^{d_1}) \longrightarrow L(\mathbb{C}^{d_2})$ which is a completely positive and trace preserving (CPTP) map that maps the initial state of a quantum system to its final state. It can be shown that the action of a quantum channel on an input state can be written as a partial trace on the evolved state of the input by an isometry \cite{stinespring1955positive}:
\begin{equation}
 \forall \ \Lambda \  \exists \  D,E \  s.t. \quad  \Lambda(\rho)= tr_E (D \rho D^{\dagger})
\end{equation}
Where $E$ is the space of the tensor product state with the input $\rho$. 

One of the most widely used set of unitary operations are Pauli matrices, also known as Pauli gates, which form a basis for the operations over a single qubit system, described as follows:
\begin{equation}
\mathbb{I}, \quad X = \begin{pmatrix} 0 & 1 \\ 1 & 0 \end{pmatrix}, \quad Z = \begin{pmatrix} 1 & 0 \\ 0 & -1 \end{pmatrix}, \quad Y = \begin{pmatrix} 0 & -i \\ i & 0 \end{pmatrix}
\end{equation}
Another important gate is the Hadamard gate, which transforms the computational basis (Z) to plus-minus
basis (X), and has the following matrix form.
\begin{equation}
H = \frac{1}{\sqrt{2}}\begin{pmatrix} 1 & 1 \\ 1 & -1 \end{pmatrix}
\end{equation}
Also, an arbitrary rotation around the Z-axis is given by
\begin{equation}
    Z(\theta) = \begin{pmatrix} 1 & 0 \\ 0 & e^{i\theta} \end{pmatrix}
\end{equation}

Finally, we introduce a two-qubit controlled operation gate known as $CZ$. This gate applies a $Z$ gate on the target qubit if the control qubit is in state $|1\ket$, and applies an identity if it is in state $|0\ket$. The $CZ$ gate can act as an entangling operation as is given by the following matrix:
\begin{equation}
    CZ = \begin{pmatrix} 1 & 0 & 0 & 0 \\ 0 & 1 & 0 & 0 \\ 0 & 0 & 1 & 0 \\ 0 & 0 & 0 & -1  \end{pmatrix}
\end{equation}


\subsection{Measurement-Based Quantum Computing.}
\label{subsec:mbqc}
Quantum computation can be described in various ways universally. In this paper, we use a universal model known as Measurement-Based Quantum Computation (MBQC)~\cite{mbqc}. This model is well-suited for delegating quantum computation and the desired properties of it.


In MBQC, the computation is defined with a pattern (also called a resource graph) which is a set of qubits prepared in $|+\ket$ states, where some of them are entangled using a $CZ$ operator. It is usually represented by a graph $G$ where only entangled nodes are connected with an edge.
\[
    |G\ket = \prod_{(i,j) \in E_G} CZ_{(i,j)} |+\ket^{\otimes V_{G}}
\]
If the computation has quantum input, a subset of the nodes of $G$ is reserved for the input states, and another subset is reserved for the output states. We call such a graph $G$ an open graph.

\begin{definition}[Open graph]
    An open graph is a tuple $(G, I, O)$, where $G= (V,E)$ is an undirected graph and $I, O \subseteq V$ are the sets of input and output vertices, respectively.
\end{definition}


To compute a general unitary computation in MBQC, one needs to measure all non-output nodes sequentially. The set of universal gates can be implemented only using the measurements in the XY-plane of the Bloch sphere and the bases $|+_{\theta}\ket, |-_{\theta}\ket$ where the angles $ \theta  \in \{\frac{k\pi}{4}: k =0, \dots, 7\}$ suffice for an arbitrary quantum computation~\cite{P05}. In what follows whenever measuring with an angle $\psi$  is mentioned, we refer to measuring in the basis $| +_{\psi} \ket$, $| -_{\psi} \ket$.

As the measurements are probabilistic by nature, in the MBQC picture, to obtain a deterministic\footnote{We call a computation pattern deterministic if its final outcome does not depend on intermediate measurement results.} outcome, the intermediate measurement outcomes need to be taken into account and corrected at the end. It has been shown in~\cite{BKMP2007} that there exist a necessary and sufficient condition on an open graph known as \emph{g-flow} to achieve such deterministic computation. This condition is defined formally as:

\begin{definition}[G-flow]\label{def:glow}
    For an open graph $(G,I,O)$ a g-flow is a tuple $(g, \prec)$, where $g: V\setminus O \rightarrow 2^{V \setminus I} \setminus \{\emptyset\}$ and $\prec$ is a partial order on $V$, satisfying the following conditions:
    \begin{itemize}
        \item If $j \in g(i)$, then $i \prec j$.
        \item If $j \in \operatorname{Odd}(g(i))$, then $i \prec j$.
        \item For all $i$ we have: $i \in \operatorname{Odd}(g(i))$.
    \end{itemize}
    In this context, $\operatorname{Odd}(K) = \{u: |N(u) \cap K| = 1 \pmod{2}\}$ for $K \subseteq V$ is the set of odd neighbours of $K$.
\end{definition}

Given a g-flow, the computation can proceed by correcting the measurement angles of each node depending on the previous measurement outcomes~\cite{mbqc}. A node $v$ of the resource graph propagates an X-correction to every node in $g(v)$ and a Z-correction to every odd neighbour of $g(v)$ (See an example in Figure~\ref{fig:gflow}). These corrections are equivalent to applying $X$ and $Z$ Pauli gates to the corresponding nodes before the measurement, but they can be incorporated into the measurement angles.


 We can also compute the ``inverse'' formula and obtain all the corrections that \textit{affect} a particular node. Then we have a formula for correcting the measurement angle of a node $w$:
\begin{equation}
\label{eq:corrections}
\begin{aligned}
    &\varphi'_w = (-1)^{s_X}\varphi_w + s_{Z}\pi
    &\text{ where } \quad
    &s_X = \sum_{j: w \in g(j)} s_j\text{,  } s_Z = \sum_{j: w \in Odd(g(j)), j \neq w} s_j.
\end{aligned}
\end{equation}
Finally, a measurement pattern is defined as below: \\ 
\begin{definition}[Measurement pattern]\label{measurement.pattern}
    A pattern in measurement based quantum computation is given by a graph $G(V,E)$, input and output vertex sets $I$ and $O$, a flow $g$ which induces a partial ordering of the qubits $V$, and a set of measurement angles $\{\phi_i\}_{i \in O}$ in the $X-Y$ plane of the bloch sphere.  
\end{definition}

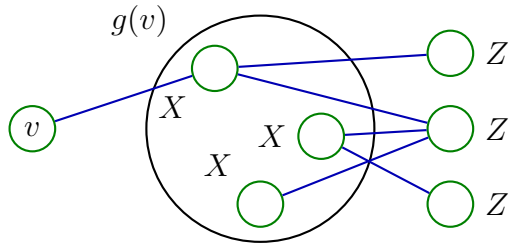
\begin{figure}[ht]
\centering
\begin{tikzpicture}
  \node[draw=black, thick, circle, minimum size=3cm, inner sep=0pt, label=above left:$g(v)$] (c) at (0,0) {};
  \node[draw=green!50!black, thick, circle, minimum size=0.6cm, inner sep=0pt] (v1) at (-3,0) {$v$};
  \node[draw=green!50!black, thick, circle, minimum size=0.6cm, inner sep=0pt, label=below left:$X$] (v2) at (-0.6,0.8) {};
  \node[draw=green!50!black, thick, circle, minimum size=0.6cm, inner sep=0pt, label=left:$X$] (v3) at (0.8,-0.1) {};
  \node[draw=green!50!black, thick, circle, minimum size=0.6cm, inner sep=0pt, label=above left:$X$] (v4) at (0,-1) {};
  \node[draw=green!50!black, thick, circle, minimum size=0.6cm, inner sep=0pt, label=right:$Z$] (v5) at (2.5,1.0) {};
  \node[draw=green!50!black, thick, circle, minimum size=0.6cm, inner sep=0pt, label=right:$Z$] (v6) at (2.5,0) {};
  \node[draw=green!50!black, thick, circle, minimum size=0.6cm, inner sep=0pt, label=right:$Z$] (v7) at (2.5,-1.0) {};
  \draw[blue!70!black, thick] (v1) -- (v2);
  \draw[blue!70!black, thick] (v2) -- (v5);
  \draw[blue!70!black, thick] (v2) -- (v6);
  \draw[blue!70!black, thick] (v3) -- (v6);
  \draw[blue!70!black, thick] (v3) -- (v7);
  \draw[blue!70!black, thick] (v4) -- (v6);
\end{tikzpicture}
\caption{In green we show all the nodes for which the measurement angle has to be adapted depending on the measurement result of $v$. The node $v$ induces an X-correction on every node in the set $g(v)$. It also induces a Z-correction on every odd neighbour of $g(v)$. Odd neighbours are the vertices that have an odd number of edges connecting them to $g(v)$.}
\label{fig:gflow}
\end{figure}

\subsubsection{Delegated computing in MBQC}
\label{subsubsec:delegated}

To perform a delegated computation in this framework between the Client with a small quantum device and the Server with a quantum computer, we proceed as follows. Firstly, the Client sends to the Server the qubits of the input and the classical description of the resource graph. Then the Server prepares a $|+\ket$ state for each non-input vertex and entangles prepared and received qubits with respect to the graph. After that, the Client sends to the Server a measurement angle for every non-output node. The Server measures them according to the g-flow and sends the measurement results to the Client. In the end, the Server sends to the Client the output qubits. These qubits can have extra X- and Z- Z-rotations because of the corrections. So the Client uncomputes them and recovers the final result. We note that performing delegation computation in this way, without any additional procedure, does not satisfy any security requirement, i.e. the server has full knowledge over the data and the details of the computations of the client. 

\subsection{Universal Blind Quantum Computation Protocol.}
\label{subsec:ubqc}

The delegated protocol explained in Section~\ref{subsubsec:delegated} does not provide any privacy for the client. However, there exist protocols for this model that can hide input, computation, and output of the client from the server. One protocol, known as Universal Blind Quantum Computation (UBQC), was first proposed by Kashefi et al. in~\cite{bfk}. We first describe informally how the protocol works. For a more detailed review, we refer the reader to \cite{fitzsimons2017private}.

Firstly, we implement the computation on a brickwork state to ensure that the graph always has the same structure and the same g-flow; for more details, see \cite{bfk}. To fully conceal the computation, it remains to hide the measurement angles. Consider a qubit in the resource graph that does not belong to the input or output sets; we refer to such a qubit as a \textit{computational} node. In the UBQC protocol, for each such node, the Client prepares a qubit in the state $|+_\theta \rangle$ with $\theta \in {\frac{k\pi}{4},; k = 0, \dots, 7}$ and sends it to the Server. The Server then entangles this qubit with the rest of the graph. When the time comes to measure the node, the Server requests a measurement angle from the Client. The Client responds with $\delta = \varphi' + \theta$. A short calculation shows that this measurement produces the same effect as measuring a $|+\rangle$ state with angle $\varphi'$, except that the angle sent to the Server is now hidden using a one-time pad. However, this masking alone introduces a subtle information leak. Suppose we measure a qubit $\rho$ with angle $\psi$ and obtain the result $s$. If $s = 1$, we can be certain that $\rho \neq |+{\psi} \rangle$; otherwise, we know $\rho \neq |-{\psi} \rangle$. Either way, we gain some information about $\rho$ from the outcome $s$. Since in our case $\rho = |+_{\theta} \rangle$, the measurement result $s$ can leak partial information about the secret key $\theta$. To prevent this, the measurement outcome of a masked node must also be hidden. The Client does this by sampling $r \sample {0,1}$ and sending the measurement angle $\delta = \varphi' + \theta + r\pi$. The term $r\pi$ flips the measurement basis when $r = 1$. The Server performs the measurement and returns the outcome $b$ to the Client. If the Server is honest, the Client can recover the true outcome as $s = b \oplus r$. Thus, the value of $s$ remains perfectly hidden. The input qubits to the computation are encrypted using $X^aZ(\theta)$. For these nodes, the Client sends a measurement angle of the form $\delta = (-1)^a\varphi + \theta + r\pi$. Because the Server does not know the actual measurement outcomes, the resulting $X$- and $Z$-corrections appear random to them. As a result, the output qubits are encrypted with a quantum one-time pad of the form $X^aZ^b$. The Client must apply the inverse of these corrections on their own device to recover the final output of the computation. We will now give a more formal description of the protocol. 

\begin{protocol}[ht]
\caption{Universal blind quantum computation~\cite{broadbent2009universal}}\label{ubqc-main}
\vspace*{1ex}
\begin{enumerate}
  \item \textbf{Alice's Preparation}  
  \begin{enumerate}
    \item For each column $x=1,...,n$ and each row $y=1,...,m$, Alice (the client) prepares the single-qubit state $|\psi_{x,y}\rangle$ as $|+_{\theta_{x,y}}\rangle$ where $\theta_{x,y} \in \{\frac{k\pi}{4} \ | \  k = 0 ,1 , ... ,7\}$, and sends the qubits to Bob (the server).
    \end{enumerate}
   \item \textbf{Bob's Preparation}
   \begin{enumerate}
       \item Bob creates an entangled state from all received qubits, according to their indices, by applying $CZ$ gates between the qubits in order to create a Brickwork state.
   \end{enumerate} 
    \item \textbf{Interaction and measurement} 
    \begin{enumerate}
    \item For each column $x=1,...,n$ and each row $y=1,...,m$:
    \begin{enumerate}
    \item Alice computes $\phi_{x,y} ^{\prime}$ where $s_{0,y} ^X = s_{0,y} ^Z =0$.
    \item Alice chooses $r_{x,y} \in \{0,1\} $ and computes $\delta_{x,y} = \phi_{x,y} ^{\prime} + \theta_{x,y}+ \pi r_{x,y}$.
    \item Alice transmits $\delta_{x,y}$ to Bob. Bob measures in the basis $\{|+_{\delta_{x,y}}\rangle , |-_{\delta_{x,y}}\rangle\}$.
    \item Bob transmits the result $s_{x,y} \in \{0,1\} $ to Alice.
    \item If $r_{x,y}=1$, Alice flips  $s_{x,y}$, otherwise she does nothing. 
    \end{enumerate}
    \end{enumerate}
\end{enumerate}
\end{protocol}

In the language of Abstract Cryptography (see Section~\ref{sec:ac-framework}), the functionality and security of UBQC is captured by Figure~\ref{fig:ideal-ubqc}~\cite{DFPR14}.
\begin{figure}
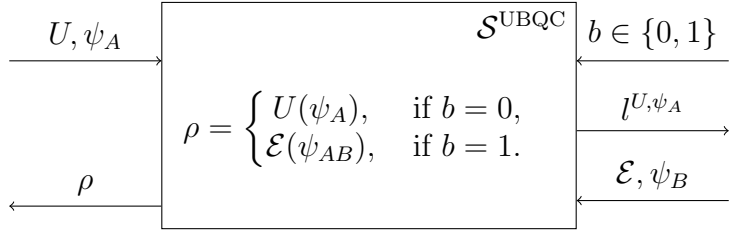

    \centering
    \begin{bbrenv}{resource}
    \begin{bbrbox}[name=$\mathcal{S}^\text{UBQC}$, minheight=3cm]
        $\rho = \left\{ \begin{matrix}
             U(\psi_A), & \text{ if } b=0, \\
            \mathcal{E}(\psi_{AB}), & \text{ if } b=1.
        \end{matrix} \right.$
    \end{bbrbox}
    \bbrmsgspace{6mm}
    \bbrmsgto{top={$U, \psi_A$}, length=2cm}
    \bbrmsgspace{12mm}
    \bbrmsgfrom{top={$\rho$}, length=2cm}
    \bbrqryspace{6mm}
    \bbrqryfrom{top={$b\in\bin$}, length=2cm}
    \bbrqryspace{2mm}
    \bbrqryto{top={$l^{U,\psi_A}$}, length=2cm}
    \bbrqryspace{2mm}
    \bbrqryfrom{top={$\mathcal{E},\psi_B$}, length=2cm}
    \end{bbrenv}
    \caption{The ideal functionality of UBQC, in the language of Abstract Cryptography.}
    \label{fig:ideal-ubqc}
\end{figure}

\subsection{Abstract Cryptography Framework}\label{sec:ac-framework}

There exist several formal frameworks to study and analyse the security properties of protocols such as UBQC. One of these frameworks is Abstract Cryptography (AC)~\cite{MR11,Mau12}. One of the advantages of AC over other formal ways, such as game-based security models, is its inherent composability: if a protocol is proven secure within the AC framework, it remains secure when composed, either sequentially or in parallel, with other secure protocols. For example, the confidentiality of a key established by a key exchange protocol holds even when multiple sessions run concurrently, provided each session satisfies the composable security definition. Another strength of AC lies in its modularity and clarity. The framework allows for precise specifications of what a protocol is intended to achieve (via ideal resources), as well as clear descriptions of the adversarial capabilities permitted in the model. This separation of concerns helps avoid ambiguity in security definitions and facilitates reasoning about complex protocol interactions.

At the core of the AC framework are resources — abstract entities (or ``boxes") that encapsulate a functionality and interact with external parties via interfaces. These resources are assumed to have unbounded computational power. When a party is honest, it interacts with the resource through a converter, a predefined, efficient CPTP (completely positive trace-preserving) map that mediates input/output behavior. Dishonest parties, on the other hand, can replace their converters with arbitrary CPTP maps. Certain interfaces may be filtered, meaning they are accessible only to malicious parties and are used to model adversarial control points. Capturing all the aspects of the protocol in such ideal resources is crucial and one of the sources of complications in AC. If the ideal resource fails to capture all relevant adversarial behaviors, the resulting security guarantee may be incomplete or even misleading. This places a strong burden on the designer to carefully and comprehensively define the security functionality.

Security in AC is formalized via resource construction. We say that a protocol $\pi$ $\epsilon$-constructs an ideal resource $\mathcal{S}$ from a real resource (or collection of resources) $\mathcal{R}$, written as:
\begin{equation}
    \pi \mathcal{R} \approx_{\epsilon} \mathcal{S}
\end{equation}

This expression means that the behavior of the protocol execution using real resources is $\epsilon$-indistinguishable from the ideal resource $\mathcal{S}$, according to a defined metric. As such, the resource construction can be defined as:

\begin{definition}[Construction of Resources]
    Let $\epsilon(k)$ be a function of the security parameter $k$. We say that a two party protocol $\pi = (\pi_1, \pi_2)$ $\epsilon$-statistically-constructs a resource $\mathcal{S}$ from resource $\mathcal{R}$ if:
    \begin{itemize}
        \item It is correct: $\pi_1\mathcal{R}\pi_2 \approx_{\epsilon} \mathcal{S}$.
        \item It is secure when the second party is corrupted: there exists a simulator $\sigma_{2}$ such that $\pi_1 \mathcal{R} \approx_{\epsilon} \mathcal{S}\sigma_{2}$.
    \end{itemize}
\end{definition}

To formalize indistinguishability, we introduce a Distinguisher, an entity whose goal is to decide whether it is interacting with the real protocol $\pi \mathcal{R}$ or the ideal resource $\mathcal{S}$. The Distinguisher may choose all inputs to the system, simulate the behavior of corrupted parties, and observe all outputs. The nature of the Distinguisher’s power determines the type of security achieved: if the Distinguisher is computationally unbounded, we obtain statistical security; if it is limited to efficient computations, the result is computational security.

\begin{definition}[Statistical Indistinguishability of Resources]
Let $\epsilon(k)$ be a function of security parameter $k$ and let $\mathcal{R}_1$ and $\mathcal{R}_2$ be two resources with the same input and output interfaces. The resources are $\epsilon$-statistically indistinguishable if, for all distinguishers $\mathcal{D}$, we have:
\[
    \big| Pr(b=1|b \leftarrow \mathcal{DR}_1) - Pr(b=1|b \leftarrow \mathcal{DR}_2) \big| \leq \epsilon
\]
Then we write $\mathcal{R}_1 \approx_{\epsilon} \mathcal{R}_2$
\end{definition}

To prove the indistinguishability between the real-world protocol execution and the ideal resource, we often construct an efficient Simulator. The Simulator connects to the ideal resource via the interface of the corrupted (malicious) party and exposes an additional interface that interacts with the Distinguisher. The role of the Simulator is to mimic the behavior of the real-world adversary in such a way that the Distinguisher cannot tell whether it is interacting with the real protocol execution or with the ideal resource and the Simulator. In other words, the Simulator must generate interaction patterns, based only on the limited information available through the ideal resource, that are indistinguishable from those produced in the real world. The overall goal is to show that for any Distinguisher, there exists a Simulator such that the joint system composed of the Simulator and the ideal resource produces an output distribution that is $\epsilon$-close (in the appropriate metric) to the output of the real protocol execution. This guarantees that the ideal resource is a faithful abstraction of the protocol's security in the presence of adversaries.

\section{Impossibility Results: No randomness distribution}
\label{sec:impossibility}

In this section, we study the possibility of implementing a local quantum process on the server side that takes as input one of the $|+_{\theta}\rangle$ states and outputs two such states (or more general states but in the product form or even entangled form with distinct independent parameters) which could serve as the vertices of the ultimate graph state we want to run the computations on. We name this machine a remote state expander (RSE). \\
If such a process exists, it can clearly reduce the communication cost in blind quantum computing. We will be studying the existence of such a machine in the information theoretic setting and in the standard model. We show that the separable output variant of this machine can not exist due to the linearity of quantum mechanics, which is also the root of no cloning theorem, a well known and fundamental theorem in quantum mechanics \cite{wootters1982single}. Finally in the last subsection, we provide a counting argument that puts a lower bound on the number of unknown parameters to the server, indicating a limitation of reducing the quantum communication complexity. Let us start by stating our main no-go results in the following proposition:

\begin{proposition}[No-go on Communication Complexity reduction for UBQC]\label{prop:ubqc-limitation}
    In any UBQC protocol between a client and a server, there exists no local quantum operation $\Lambda_S$ (prescribed by the client and performed by the server) mapping a resource state $|+_{\theta}\ket$ to a state $\rho$ used on two or multiple nodes of the underlying graph state, such that the new resource, satisfies the completeness and blindness of the protocol simultaneously. In particular, no such server-side process can generate or recycle resource states in a way that reduces the quantum communication from the client without compromising security.
\end{proposition}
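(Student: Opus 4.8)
The plan is to prove the contrapositive by a case analysis on the structure of the output state $\rho$ produced by the hypothetical server-side map $\Lambda_S$, showing that in every admissible case either completeness or blindness must fail. Since the map is prescribed by the client but executed on the server, it is an arbitrary CPTP map $\Lambda_S : L(\mathbb{C}^2) \to L(\mathbb{C}^2 \otimes \mathbb{C}^2)$ (the two-node case; the multi-node case is entirely analogous). First I would establish the \emph{completeness constraint}: for the two output qubits to act as genuine vertices of the graph state and reproduce the honest computation, each reduced output state must be close to a pure XY-plane state $|+_{\theta_i}\ket$ whose angle depends on the secret $\theta$, with the joint state essentially their product. This isolates the separable output as the principal scenario, with entangled and merely approximate outputs treated as relaxations to be ruled out afterwards.

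The core obstruction -- and the root of the whole no-go -- is the linearity of $\Lambda_S$. Writing the admissible inputs in the Pauli basis,
\begin{equation}
|+_\theta\ket\bra+_\theta| = \tfrac{1}{2}\bigl(\mathbb{I} + \cos\theta\, X + \sin\theta\, Y\bigr),
\end{equation}
the input family contains only the \emph{first harmonics} $\cos\theta,\sin\theta$ of the secret, so its Bloch vectors trace a circle. Because $\Lambda_S$ is linear, every matrix element of $\Lambda_S(|+_\theta\ket\bra+_\theta|)$ is an affine function of $(\cos\theta,\sin\theta)$, hence its image is merely an affine deformation of that circle lying in a low-dimensional subspace. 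A product output $|+_{\theta_1}\ket\bra+_{\theta_1}| \otimes |+_{\theta_2}\ket\bra+_{\theta_2}|$ whose angles are to be hidden independently necessarily contains cross terms of the form $e^{i(\theta_1 \pm \theta_2)}$, i.e. \emph{second harmonics} in $\theta$, which no linear map can synthesise from first-harmonic inputs. This is precisely the obstruction behind no-cloning, and it settles the separable case (Lemma \ref{lemma:no-sep-gadget-d}, with the general-channel version in Lemma \ref{lemma:no-dist-gadget-channel}): matching first harmonics forces the two output angles to be affinely dependent, so they cannot mask two independent computation parameters.

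Next I would dispatch the entangled relaxation. A local operation on the pure input $|+_\theta\ket$, dilated to an isometry, yields a pure joint output, whose two marginals therefore share a common von Neumann entropy; any genuine entanglement makes both marginals mixed, so neither qubit is a clean $|+_{\theta_i}\ket$ vertex and completeness breaks. Since entanglement cannot be increased by local operations, the forced symmetry of the bipartite spectrum leaves no room to encode two independent usable angles (Lemma \ref{lemma:no-dist-entangled}). Finally, for the approximate/correlated relaxation -- where one only demands the two output angles to be \emph{nearly} independent -- I would exhibit an explicit differential attack: the residual correlation between the classical descriptions of the two nodes lets the server cross-correlate their measurement statistics and extract information about $\theta$, breaking blindness by a non-negligible amount (Lemma \ref{lemma:no-go-correlated-angles}).

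The step I expect to be the main obstacle is the approximate case. The harmonic argument cleanly kills exact separable and exact entangled outputs, but ruling out \emph{every} approximate remote state expander requires quantifying how even a small deviation must either reintroduce a detectable error in completeness or leave enough angle correlation for the differential attack to beat the Abstract-Cryptography distinguishing bound. Making this leakage estimate robust -- and confirming it cannot be pushed below the security threshold by any choice of $\Lambda_S$ -- is where the real work lies. The counting argument of Lemma \ref{lemma:ubqc-lower-bound} then complements the picture by showing the number of independent hidden parameters delivered to the server must remain superlogarithmic in the size of the computation, so that no server-side expansion can asymptotically reduce the quantum communication; assembling these pieces yields Proposition \ref{prop:ubqc-limitation}.
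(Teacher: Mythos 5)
Your proposal is correct and mirrors the paper's overall architecture almost exactly: the same case split into exact separable outputs, entangled outputs, and approximate/correlated outputs (the paper's Lemmas \ref{lemma:no-sep-gadget-d}, \ref{lemma:no-dist-gadget-channel}, \ref{lemma:no-dist-entangled}, \ref{lemma:no-go-correlated-angles}), the same entropy-equality argument from the Schmidt decomposition for the entangled case, the same differential attack on correlated one-time-pad keys for the approximate case, and the same counting lower bound as a complement. Where you genuinely diverge is the core separable lemma. The paper proves it constructively: it assumes the isometry exists, fixes its action on the Hadamard basis, uses rotational symmetry and inner-product preservation to pin down the matrix of $D_2$, and then shows the separability condition $(1+e^{i\theta})(1-e^{i\theta})\sin\delta = 0$ admits only degenerate solutions (cloning or the trivially insecure separable map); its appendix gives a second proof via a span-dimension mismatch. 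Your harmonic argument instead works directly at the level of density matrices: since $|+_\theta\ket\bra{+_\theta}|$ carries only first harmonics of $\theta$, linearity confines every matrix element of the output to the span of $\{1, e^{i\theta}, e^{-i\theta}\}$, while a product of two nontrivially $\theta$-dependent XY-plane states necessarily contains second-harmonic cross terms $e^{i(\theta_1\pm\theta_2)}$. This buys you two things the paper's route does not: it handles general CPTP maps in one step (subsuming the Stinespring reduction of Lemma \ref{lemma:no-dist-gadget-channel}), and the unit-modulus-plus-affineness constraint transparently classifies the only escapes as $\theta_i$ constant or $\theta_i = \pm\theta + c$, exactly the degenerate cases the paper finds by explicit computation. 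Your closing caveat about quantifying leakage in the approximate case is fair, but note the paper does not do this quantitatively either; its Lemma \ref{lemma:no-go-correlated-angles} is a qualitative reduction to the insecurity of a one-time pad with correlated keys, so your sketch is at the same level of rigor as the published argument.
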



We support our claim in the above proposition through a central, fundamental no-go theorem and a lemma concerning UBQC. Our first no-go result rules out the possibility of a universal gadget for RSE. Nonetheless, this no-go result is not limited to the scope of UBQC and delegated computation; more generally, it demonstrates that no universal quantum machine exists for distributing a random resource state into multiple useful resource states related to the original random resource. We begin by stating the theorem, and to capture this no-go in full generality, we provide a sequence of lemmas that establish the result across all relevant scenarios.

\begin{theorem} \label{th:main-no-go}
    There exists no quantum machine $D:L(\mathbb{C}^2) \longrightarrow L((\mathbb{C}^2)^{\otimes2})$ that takes as input a state $|+_{\theta}\ket$, and maps it into a bi-partite system $\rho_{A,B}$ as $D (|+_{\theta}\rangle) = \rho_{A,B} (\theta_1, \theta_2)$, such that all the following three conditions hold:
    \begin{equation} \label{ent}
    \begin{split}
        & \text{(1) } tr_B(\rho_{A,B} (\theta_1 , \theta_2)) = \rho_A(\theta_1) \\
        & \text{(2) } tr_A(\rho_{A,B} (\theta_1 , \theta_2)) = \rho_B(\theta_2) \\
        & \text{(3) } \text{If the parameter $\theta$ is unknown, then no information about $\theta_1$ and $\theta_2$ is accessible}\\
        & \text{from the output $\rho_{A,B}$.}
    \end{split}
\end{equation} 
\end{theorem}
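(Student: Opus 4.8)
The plan is to exploit the \emph{linearity} of the map $D$, which is the common root of all three obstructions. Since any quantum machine (isometry or channel) acts linearly on density operators, I would first expand the input in the Pauli basis,
\[
|+_\theta\rangle\langle+_\theta| = \tfrac{1}{2}\bigl(I + \cos\theta\, X + \sin\theta\, Y\bigr),
\]
and push this through $D$ to obtain
\[
\rho_{A,B}(\theta) = D\bigl(|+_\theta\rangle\langle+_\theta|\bigr) = \tfrac{1}{2}\bigl(D(I) + \cos\theta\, D(X) + \sin\theta\, D(Y)\bigr).
\]
The decisive structural fact is that the entire output is an \emph{affine} function of the single pair $(\cos\theta,\sin\theta)$: it traces one ellipse in operator space as $\theta$ varies. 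Tracing out either subsystem commutes with this expansion, so both reduced states $\rho_A$ and $\rho_B$ are slaved to the one parameter $\theta$. Consequently the two angles appearing in conditions (1)--(2) cannot be independent: $\theta_1=f(\theta)$ and $\theta_2=g(\theta)$ are deterministically linked, which already contradicts the intended reading of the output as carrying two \emph{distinct independent} parameters.

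Next I would turn the blindness requirement (3) against conditions (1)--(2). Because $\theta_1$ and $\theta_2$ are functions of $\theta$, the requirement that no information about $\theta_1,\theta_2$ be accessible from $\rho_{A,B}$ is equivalent to the requirement that no information about $\theta$ be accessible. Over a prior with full support on $\theta$, this forces $\rho_{A,B}(\theta)$ to be the same operator for every value of $\theta$, since any two distinct states are distinguishable with strictly positive advantage and would therefore leak. By the linear independence of $\{1,\cos\theta,\sin\theta\}$ as functions of $\theta$, constancy of the displayed expression forces $D(X)=D(Y)=0$. But then the marginals $\rho_A,\rho_B$ are likewise constant and carry no dependence on $\theta_1,\theta_2$ whatsoever --- directly contradicting (1)--(2), which demand that the reduced states be the useful, parameter-encoding resource states $\rho_A(\theta_1)$ and $\rho_B(\theta_2)$. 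Hence the conjunction of (1), (2) and (3) is unsatisfiable.

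Finally, I would close the cases that the bare linearity argument does not immediately settle, and I expect this to be the main obstacle. If the marginals are allowed to be \emph{mixed} --- which genuine hiding in (3) in fact pushes them to be --- one can no longer conclude that the joint state is a product of pure equatorial states, and a useful resource might a priori be concealed inside bipartite \emph{entanglement} with correlated marginals. I would handle this separately: a local channel applied to a single qubit cannot create entanglement across $A$ and $B$, so the two reduced von Neumann entropies must coincide and cannot be inflated by $D$; this symmetry rules out the entangled variant (matching Lemma~\ref{lemma:no-dist-entangled}). The last escape route --- tolerating a \emph{small but nonzero} correlation between $\theta_1$ and $\theta_2$ so as to \emph{approximate} the forbidden behaviour --- I would eliminate with an explicit differential distinguisher that detects the residual correlation shared between the two nodes (matching Lemma~\ref{lemma:no-go-correlated-angles}). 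Together with the exact separable case above, these exhaust all scenarios and yield Proposition~\ref{prop:ubqc-limitation} as a corollary.
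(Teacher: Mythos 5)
Your opening move --- expanding $|+_\theta\rangle\langle+_\theta| = \tfrac12(I+\cos\theta\,X+\sin\theta\,Y)$ and pushing it through the linear map $D$ --- is sound and is in the spirit of the paper's dimension-counting alternative proof in Appendix~\ref{sec:alternative-proof-lemma-1}. But the two contradictions you then extract do not hold up. First, the observation that $\theta_1=f(\theta)$ and $\theta_2=g(\theta)$ are deterministically slaved to $\theta$ is not a contradiction with anything in the statement: the theorem never asks for $\theta_1$ and $\theta_2$ to be statistically independent of $\theta$, and the surrounding text explicitly envisages the client knowing ``the linear dependency equation between the input and output angles.'' The independence that matters (and that Lemma~\ref{lemma:no-dist-entangled} exploits) is that each \emph{marginal} depends only on its own parameter, which is a different condition.

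Second, and more seriously, your decisive step reads condition (3) as forcing $\rho_{A,B}(\theta)$ to be literally constant in $\theta$, on the grounds that any two distinct states are distinguishable with positive advantage and hence ``leak.'' That reading proves too much: the input resource $|+_\theta\rangle$ itself is not constant in $\theta$ and admits measurements with nonzero Bayesian advantage about $\theta$, yet it is precisely the paradigm of an acceptably hidden keyed state in this setting. Condition (3) is used in the paper in the operational sense that the output keys must remain hidden to the same degree the input key is --- e.g.\ the separable case $U_1\otimes U_2$ fails it because the public description of $D$ \emph{fully determines} one output angle, not because the output merely varies with $\theta$. Under that reading, constancy of $\rho_{A,B}$ does not follow, $D(X)=D(Y)=0$ is not forced, and your main contradiction evaporates. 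What is actually needed for the separable case is either the paper's explicit analysis (inner-product preservation pins $D$ down to a one-parameter family, and output separability for all $\theta$ forces $\sin\delta=0$, collapsing to cloning or to a publicly-determined angle) or a completed version of the dimension argument you started (the output orbit lies in a $3$-dimensional operator subspace while genuine two-parameter product outputs require four). Your third paragraph --- the entropy-symmetry argument for the entangled case and the differential attack on correlated keys --- does correctly match Lemmas~\ref{lemma:no-dist-entangled} and~\ref{lemma:no-go-correlated-angles}, but the separable core of the theorem is left unproven.
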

\noindent The proof of the theorem is given after Lemma~\ref{lemma:no-dist-entangled}. The above general no-go has a particularly important consequence in the context of UBQC, where the instruction for the description of $D$ comes from the client and is implemented by the server. If such a gadget $D$ were to exist, the server could locally generate additional independent resource states from a given $|+_{\theta}\rangle$ state, where the only available information was the linear dependency equation between the input and output angles. These newly generated states could then be used in UBQC without compromising blindness. With such a gadget, it would be possible to reduce the quantum communication overhead. However, since such a linear operation is quantum mechanically impossible, no general strategy can include such a resource optimisation subroutine. This significantly limits the possibility of designing a more communication-efficient UBQC protocol. Also note that here $D$ is any general linear operation. More specifically, in the following lemma, we assume $D$ to be an isometry and we show that no such isometry can exist to provide a separable structure, as
\begin{equation}\label{eq:d-gadget}
    D (|+_{\theta}\ket) = |+_{\theta_1} \rangle \otimes |+_{\theta_2} \rangle \hspace{0.5cm}, \forall \  \theta.
\end{equation}
where condition $3$ of Theorem~\ref{th:main-no-go} is satisfied. i.e. $\theta_1$ and $\theta_2$ cannot recovered if $\theta$ is fully unknown.

\begin{lemma}\label{lemma:no-sep-gadget-d}
    There exists no isometry operation $D:L(\mathbb{C}^2) \longrightarrow L((\mathbb{C}^2)^{\otimes2})$ outputting according to~\ref{eq:d-gadget}, while no information about $\theta_1$ and $\theta_2$ is accessible, if $\theta$ is unknown. 
\end{lemma}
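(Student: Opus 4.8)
The plan is to exploit the linearity of $D$ to pin down the entire output curve from its action on a single basis, and then to show that the product structure forces one of the two output qubits to be a fixed state independent of $\theta$. Since $|+_\theta\rangle = \tfrac{1}{\sqrt2}(|0\rangle + e^{i\theta}|1\rangle)$, linearity gives $D|+_\theta\rangle = |A\rangle + e^{i\theta}|B\rangle$ with $|A\rangle = \tfrac{1}{\sqrt2}D|0\rangle$ and $|B\rangle = \tfrac{1}{\sqrt2}D|1\rangle$. The isometry condition $D^\dagger D = \mathbb{I}$ makes $D|0\rangle, D|1\rangle$ orthonormal, hence $\langle A|A\rangle = \langle B|B\rangle = \tfrac12$ and $\langle A|B\rangle = 0$. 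Writing the two-qubit amplitudes as $c_{ij}(\theta) = a_{ij} + e^{i\theta} b_{ij}$, the whole problem reduces to a single question: when can this family, affine in $z = e^{i\theta}$, lie on the set of product states for every $\theta$?

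\textbf{Core argument.} Being a product state is equivalent to the phase-independent condition $c_{00}c_{11} - c_{01}c_{10} = 0$. Substituting $c_{ij} = a_{ij} + z\, b_{ij}$ yields a polynomial of degree at most two in $z$ that must vanish on the whole unit circle, hence identically. Its three coefficients say: ($z^0$) $|A\rangle$ is a product state, ($z^2$) $|B\rangle$ is a product state, and ($z^1$) a cross condition. Thus $|A\rangle = \tfrac{1}{\sqrt2}|p\rangle|q\rangle$ and $|B\rangle = \tfrac{1}{\sqrt2}|r\rangle|s\rangle$ for unit single-qubit vectors, and a short computation shows the $z^1$ coefficient factors as $\det[\,|p\rangle\ |r\rangle\,]\,\det[\,|q\rangle\ |s\rangle\,]$, the product of the two $2\times2$ determinants whose columns are the respective single-qubit vectors; each vanishes exactly when that pair is parallel. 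Meanwhile the isometry orthogonality $\langle A|B\rangle = \tfrac12\langle p|r\rangle\langle q|s\rangle = 0$ forces $|p\rangle\perp|r\rangle$ or $|q\rangle\perp|s\rangle$. Combining the two: whichever pair is orthogonal cannot be parallel, so the $z^1$ condition makes the other pair parallel, i.e. one tensor factor is the same state for all $\theta$, say $D|+_\theta\rangle = |\text{fixed}\rangle \otimes \tfrac{1}{\sqrt2}(|q\rangle + e^{i\theta}|s\rangle)$ (or with the roles swapped). The same conclusion can be read off by frequency counting: a product $|+_{\theta_1}\rangle\otimes|+_{\theta_2}\rangle$ carries the harmonic $e^{i(\theta_1+\theta_2)}$ in its $|11\rangle$ amplitude, so if both $\theta_1,\theta_2$ genuinely tracked $\theta$ it would produce a frequency-$2$ term $e^{2i\theta}$, which the affine amplitude $a_{11}+e^{i\theta}b_{11}$ cannot contain.

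\textbf{Conclusion and main obstacle.} A tensor factor that is constant in $\theta$ corresponds to one of $\theta_1,\theta_2$ being a fixed, publicly determined angle; such an angle is fully recoverable with no knowledge of $\theta$, directly violating condition (3) of Theorem~\ref{th:main-no-go}. Hence no isometry can realise~\ref{eq:d-gadget} while keeping both output angles hidden, which is the claim. I expect the delicate step to be the analysis of the $z^1$ cross term: it is precisely what rules out the a priori possibility that $|A\rangle + e^{i\theta}|B\rangle$ is entangled for individual $\theta$ yet still sweeps only through product states, and it is where the orthogonality coming from the isometry does the essential work. A secondary point requiring care is making the link between ``a constant tensor factor'' and the information-theoretic condition (3) precise, i.e. arguing that a $\theta$-independent factor carries a known angle and therefore cannot serve as a blinding resource.
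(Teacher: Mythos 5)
Your proof is correct, and it takes a genuinely different route from the paper's. The paper splits into the cases where $D$ is a product of local unitaries versus an entangling isometry; in the latter case it normalizes $D$ by local $Z$-rotations so that $D|+\rangle = |+\rangle\otimes|+\rangle$, uses inner-product preservation to pin $D|-\rangle$ down to $|-\rangle\otimes|+_\delta\rangle$ with a single remaining parameter $\delta$, writes the explicit matrix, and reads off the separability condition $(1+e^{i\theta})(1-e^{i\theta})\sin\delta=0$, disposing of the surviving values $\delta=\pi$ and $\delta=0$ by an appeal to no-cloning and to the already-excluded separable case respectively. You instead argue globally: the product condition $c_{00}c_{11}-c_{01}c_{10}=0$ becomes a degree-$2$ polynomial in $z=e^{i\theta}$ that must vanish identically, its $z^0$ and $z^2$ coefficients force $|A\rangle$ and $|B\rangle$ to be product states, the $z^1$ coefficient factors as $\det[\,|p\rangle\ |r\rangle\,]\cdot\det[\,|q\rangle\ |s\rangle\,]$ (I checked this factorization; it is right), and the isometry orthogonality $\langle p|r\rangle\langle q|s\rangle=0$ then forces the non-orthogonal pair to be parallel, i.e.\ one tensor factor to be $\theta$-independent. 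This buys you a proof with no case split and no invocation of no-cloning --- the would-be cloner is excluded automatically because both of its determinants are nonzero --- and it is slightly more general, since you never use that the outputs are equatorial states. Both proofs terminate at the same place: the only isometries with product outputs for all $\theta$ have one frozen factor, whose angle is determined by the public description of $D$ and therefore violates condition (3); the paper makes this same final step when it discards the separable case. One minor caveat: your ``frequency counting'' aside is only heuristic (it implicitly assumes $\theta_1+\theta_2$ depends linearly on $\theta$), but you present it as a sanity check, and the determinant argument carries the full weight on its own.
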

\begin{proof}
We give a constructive proof here, and we also provide a second general proof by contraposition in Appendix~\ref{sec:alternative-proof-lemma-1}. The second proof can also be seen as a no-cloning type of proof for random state distribution by any quantum operation.

We first note that for this lemma, we explicitly require the output to satisfy a separable product from. Although $D$ itself can be a separable or an entangling process. We consider each case separately, starting from the separable case.  Also, the isometry can take any arbitrary ancillary system as an input. Without loss of generality, we can assume this ancillary state is instantiated in state $|+_{\theta'}\ket$ (as $D$ is arbitrary, it can include mapping the initial ancillary state into any arbitrary state). Let's assume now that $D$ consists of a separable one-qubit unitary as: 
\begin{equation}
    D(|+_{\theta}\rangle) = U_1 \otimes U_2 \  (|+_{\theta}\rangle \otimes |+_{\theta'}\rangle) 
\end{equation}
where $U_1, U_2: L(\mathbb{C}^2) \longrightarrow L(\mathbb{C}^2) $ are the single qubit operations. With no mixing between the states of the two input qubits, the description of $U_1$ and $U_2$ fully characterizes at least one of the two output angles. Thus, clearly no separable operation can satisfy condition $(3)$.



We then conclude that such $D$ needs to include entangling operations. We ask if we can construct an isometry which outputs two product states of the desired form in Equation~\ref{eq:d-gadget} for all input angles $\theta$. Assume that such an isometry exists. We write its action on a full basis and without loss of generality, we select the Hadamard basis: 
\begin{equation}
\begin{split}
    & D(|+\rangle) = |+_{\alpha} \rangle \otimes |+_{\beta} \rangle,\\
    & D(|-\rangle) = |+_{\gamma} \rangle \otimes |+_{\delta} \rangle
\end{split}
\end{equation}

Using rotational symmetry, and given by two local operations on each qubit, $|+\rangle$ can be mapped to $|+\rangle \otimes |+\rangle$, the number of parameters of $D$ is reduced by two, and we can rewrite a new isometry $D_2$ such that:
\begin{equation}
    D_2 = (Z (-\alpha) \otimes Z (-\beta)) D,
\end{equation}
\noindent where $D_2$ acts as $D_2(|+\rangle) = |+\rangle \otimes |+\rangle $. As the isometry should preserve the inner products implies that: 
\begin{equation}
   \langle +| - \rangle =  (\langle + |D_2^{\dagger}) (D_2 |-\rangle) = \langle + , + | +_{\gamma} , +_{\delta}\rangle = 0 
\end{equation}
which forces one of the remaining parameters, $\gamma$ or  $\delta$, to be equal $\pi$. Let us choose $\gamma =\pi$, leading to $D_2$ satisfying:
\begin{equation} \label{action on basis}
    D_2(|+\rangle) = |+\rangle \otimes |+\rangle, \quad D_2(|-\rangle) = |-\rangle \otimes |+_{\delta}\rangle,
\end{equation}
Using the above equation, we can easily derive the matrix form of this isometry in the Hadamard basis to be the following: 
\begin{equation}
    D_2 = \left(\begin{array}{cc}1 &0\\ 0&0 \\ 0 & cos(\delta) \\ 0 & sin(\delta)\end{array}\right)
\end{equation}
We now require the action of such an isometry on a $|+_{\theta}\rangle$ to satisfy the desired tensor product form. Writing the planar input state as $|+_{\theta}\rangle = \frac{1}{2} ((1+e^{i\theta})|+\rangle + (1-e^{i\theta})|-\rangle)$, we have:
\begin{equation}
     D_2 (|+_{\theta}\rangle) = \frac{1}{2} \left(\begin{array}{cc}1+e^{i\theta} \\ 0 \\ cos \delta (1-e^{i\theta})  \\ sin \delta (1-e^{i\theta})\end{array}\right)  
\end{equation}

The separability condition implies that:
\begin{equation}
    (1+e^{i\theta})(1-e^{i\theta}) \sin \delta = 0 
\end{equation}
The possible solutions to the above equation are: either $\theta=0$ and $\theta= \pi$, or $\sin \delta =0$ which leads to $\delta= 0, \pi$. In $\delta=\pi$, the problem reduces to quantum cloning, and hence impossible. The $\delta=0$ case reduces to the separable case, which we discarded earlier, where the condition $(3)$ cannot be achieved. This means that no such isometry can exist for all $\theta$, and the proof is complete.
\end{proof}

\noindent We concluded that no reversible linear operation exists for our desired gadget. We note that this lemma not only applies to the original UBQC protocol, but also to extensions that only require non-identical states with zero overlaps as shown in~\cite{dunjko2016blind}. This is due to the fact that we put no restrictions on $\theta_1$ and $\theta_2$ apart from the blindness requirement captured by condition $(3)$.
Now we generalise the no-go to quantum channels, by relaxing the purity. We are still requiring the tensor product form of the output. We state the no-go in the following lemma:  

\begin{lemma}\label{lemma:no-dist-gadget-channel}
  There exists no quantum channel $\Lambda: L(C^2) \longrightarrow L(C^2)$, satisfying the following output separability, while no information about $\theta_1$ and $\theta_2$ is accessible, if $\theta$ is unknown.
\begin{equation}\label{eq:channel-generlization}
    \Lambda (|+_{\theta}\rangle \langle +_{\theta} | ) = \rho (\theta_1) \otimes \rho (\theta_2). 
\end{equation}  
\end{lemma}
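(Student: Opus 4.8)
The plan is to exploit the \emph{linearity} of the channel $\Lambda$ together with the geometry of the input family $\{|+_\theta\ket\}$, which sweeps out the equator of the Bloch sphere. The crucial observation is that the required output $\rho(\theta_1)\otimes\rho(\theta_2)$ is a \emph{product} state, and that a product of two single-qubit states depends \emph{quadratically} on its Bloch vectors, whereas the image of the equatorial circle under any linear map can depend only \emph{affinely} (i.e.\ through $\cos\theta$ and $\sin\theta$, with no second harmonic) on $\theta$. Matching these two descriptions will force the $\theta$-dependence of at least one output factor to collapse, which then contradicts either blindness (condition~(3) of Theorem~\ref{th:main-no-go}) or the very purpose of the gadget. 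Unlike the isometry case of Lemma~\ref{lemma:no-sep-gadget-d}, one cannot simply invoke inner-product preservation, so the argument is carried out entirely at the level of density matrices; this is exactly what ``relaxing the purity'' requires.

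Concretely, first I would write $|+_\theta\ket\bra +_\theta| = \tfrac12(\mathbb{I} + \cos\theta\, X + \sin\theta\, Y)$ and use linearity to obtain
\begin{equation}
    \Lambda(|+_\theta\ket\bra +_\theta|) = \tfrac12\bigl(\Lambda(\mathbb{I}) + \cos\theta\,\Lambda(X) + \sin\theta\,\Lambda(Y)\bigr),
\end{equation}
so that every matrix entry of the output is affine in $(\cos\theta,\sin\theta)$. Taking partial traces and using conditions (1)--(2) of Theorem~\ref{th:main-no-go}, the reduced Bloch vectors $\vec r_1(\theta)$ and $\vec r_2(\theta)$ of $\rho(\theta_1)$ and $\rho(\theta_2)$ are themselves affine, of the form $\vec r_1(\theta) = \vec p + \cos\theta\,\vec q + \sin\theta\,\vec w$ and analogously $\vec r_2(\theta)=\vec p' + \cos\theta\,\vec q' + \sin\theta\,\vec w'$. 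Expanding the product, the coefficient of $\sigma_a\otimes\sigma_b$ in $\rho(\theta_1)\otimes\rho(\theta_2)$ equals $\tfrac14\, r_{1,a}(\theta)\,r_{2,b}(\theta)$, which is quadratic and therefore produces $\cos2\theta$ and $\sin2\theta$ contributions. Since the left-hand side carries no second harmonic, these must cancel identically, giving for all $a,b$ the two constraint families $q_{a}q'_{b} = w_{a}w'_{b}$ and $q_{a}w'_{b} + w_{a}q'_{b} = 0$.

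I would then solve these constraints. Read as $3\times3$ matrix identities they state $\vec q\,\vec q'^{\top} = \vec w\,\vec w'^{\top}$ and $\vec q\,\vec w'^{\top} = -\vec w\,\vec q'^{\top}$. If both $\vec q$ and $\vec q'$ were nonzero, the first identity is a nonzero rank-one matrix, forcing $\vec w = \mu\vec q$ and $\vec w' = \nu\vec q'$ for real scalars with $\mu\nu = 1$; substituting into the second identity yields $\mu+\nu=0$, which is impossible over the reals. Hence one of $\vec q,\vec q'$ vanishes, and feeding this back into the two families forces one of the two output Bloch vectors to be constant, i.e.\ one of $\rho(\theta_1),\rho(\theta_2)$ is completely independent of $\theta$.

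Finally I would interpret this collapse. A $\theta$-independent output factor is a fixed state that the server can identify exactly, so its angle is fully determined and known to the server; this directly violates condition~(3), which demands that no information about $\theta_1$ or $\theta_2$ be accessible when $\theta$ is unknown (equivalently, it means the gadget failed to manufacture a second genuinely keyed resource). The only remaining escape would be the degenerate ``cloning'' configuration in which both factors track $\theta$ identically, which is excluded exactly as in Lemma~\ref{lemma:no-sep-gadget-d} and by the no-cloning theorem. I expect the main obstacle to be, conceptually, pinning down the operational meaning of condition~(3) so that ``constant output factor'' is rigorously shown to break blindness rather than merely usefulness; a secondary subtlety is confirming that a Stinespring-dilation shortcut does \emph{not} work, since the mixedness of $\rho(\theta_1),\rho(\theta_2)$ blocks a direct reduction to the pure-state isometry result of Lemma~\ref{lemma:no-sep-gadget-d}, which is precisely why the self-contained harmonic-cancellation argument is needed.
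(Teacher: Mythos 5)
Your proof is correct, but it takes a genuinely different route from the paper. The paper's own proof is a two-line reduction: it invokes the Stinespring dilation to get an isometry $G$ with $\Lambda(\rho)=\tr_E(G\rho G^{\dagger})$, asserts that $G|+_{\theta}\rangle$ must then be a product of purifications $|\psi_{\theta_1}\rangle\otimes|\psi_{\theta_2}\rangle$ aligned with the output bipartition, and appeals to Lemma~\ref{lemma:no-sep-gadget-d}. You are right to be suspicious of that shortcut: a purification of $\rho(\theta_1)\otimes\rho(\theta_2)$ is only unique up to an isometry on the environment (which may depend on $\theta$), the environment need not factor as $D\otimes E$ aligned with the two output registers, and Lemma~\ref{lemma:no-sep-gadget-d} as proven only treats equatorial pure outputs $|+_{\theta_1}\rangle\otimes|+_{\theta_2}\rangle$ rather than arbitrary product purifications in larger spaces. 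Your harmonic-cancellation argument avoids all of this: linearity of $\Lambda$ makes every output Pauli coefficient affine in $(\cos\theta,\sin\theta)$, the product structure makes the $\sigma_a\otimes\sigma_b$ coefficients quadratic in the marginal Bloch vectors, and killing the second harmonic forces (via your rank-one matrix identities, which I checked: $\mu\nu=1$ together with $\nu=-\mu$ has no real solution) one marginal to be exactly constant in $\theta$. That constant factor has a publicly determined parameter, which violates condition~(3) in precisely the same informal sense the paper itself uses to dismiss the separable-unitary case inside the proof of Lemma~\ref{lemma:no-sep-gadget-d}, so your contradiction is on equal footing with the paper's. Two small remarks: your ``remaining cloning escape'' is vacuous, since the case analysis already forces a constant factor in every branch; and the codomain $L(\mathbb{C}^2)$ in the lemma statement is a typo for a two-qubit output space, which you correctly read through. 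Net effect: your argument is self-contained, works directly at the density-matrix level, and actually proves a sharper structural statement than the paper's reduction does.
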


\begin{proof}
We assume such a CPTP map exists. We use the Stinespring dilation theorem~\cite{stinespring1955positive} to represent the channel via an isometry $G$, such that:
\begin{equation}
    \Lambda(|+_{\theta}\rangle \langle +_{\theta} |) = tr_{3,4} (G |+_{\theta}\rangle \langle +_{\theta} | G^{\dagger})    
\end{equation}
More precisely the existence of $\Lambda$ implies the existence of an isometry $G: L(H_A) \longrightarrow L(H_B \otimes H_D \otimes H_C \otimes H_E ) $ such that $G |+_{\theta}\rangle= |\psi_{\theta_1}\rangle \otimes |\psi_{\theta_2}\rangle$ where $\rho(\theta_1)= tr_D (|\psi_{\theta_1}\rangle \langle \psi_{\theta_1}| )$ and $\rho(\theta_2)= tr_E (|\psi_{\theta_2}\rangle \langle \psi_{\theta_2}| )$. However, in Lemma~\ref{lemma:no-sep-gadget-d} we showed no such isometry exists. Hence, we reach a contradiction and conclude the proof by contraposition.
\end{proof}

We now relax the separability condition. We  assume an RSE machine which outputs an entangled state with two parameters, such that the reduced density matrix of each subsystem is only a function of one of the parameters. Even though, to the best of our knowledge, no UBQC protocol has been designed working with such entangled states, we argue that it might still be possible to do so. However, the independence of the two subsystems in terms of the secret parameter is the minimal requirement for such states to be considered a resource state for UBQC (we will clarify this later as well) Hence, proving a no-go under this assumption will rule out any information-theoretic strategy or subroutines of a new UBQC protocol that can be reduced to the creation of such resource states. In the next lemma, we give this no-go for the isometry, which can always be extended to the channel form using Steinspring. 



\begin{lemma}\label{lemma:no-dist-entangled}
There exists no isometry operation $D:L(\mathbb{C}^2) \longrightarrow L((\mathbb{C}^2)^{\otimes2})$ outputting an entangled state of two parameters $\theta_1, \theta_2$ as $|\psi_{\theta_1 , \theta_2}\rangle$, while no information about $\theta_1$ and $\theta_2$ is accessible, if $\theta$ is unknown, and the following condition is satisfied:
\begin{equation}
    \begin{split}
        & \rho_A(\theta_1) = tr_B (|\psi_{\theta_1 , \theta_2}\rangle \langle |\psi_{\theta_1 , \theta_2}| )\\
        & \rho_B(\theta_2) = tr_A (|\psi_{\theta_1 , \theta_2}\rangle \langle |\psi_{\theta_1 , \theta_2}| )
    \end{split}
\end{equation}\label{eq:entangled-state-ind-cond}
\end{lemma}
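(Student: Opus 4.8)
\noindent\emph{Proof idea.} The plan is to isolate the one feature that distinguishes this case from the separable gadget of Lemma~\ref{lemma:no-sep-gadget-d}: an isometry sends the pure input to a \emph{pure} bipartite output. Since $D^{\dagger}D=\mathbb{I}$ and $|+_{\theta}\ket$ is pure, the image $|\psi_{\theta_1,\theta_2}\ket = D|+_{\theta}\ket$ is a genuine pure state of $(\mathbb{C}^2)^{\otimes 2}$ for every $\theta$, and by linearity $D|+_{\theta}\ket=\tfrac{1}{\sqrt 2}\bigl(|v_0\ket+e^{i\theta}|v_1\ket\bigr)$ with $|v_0\ket,|v_1\ket$ the orthonormal images of the computational basis, exactly as in the proof of Lemma~\ref{lemma:no-sep-gadget-d}. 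The central tool is then the Schmidt decomposition: for any pure bipartite state the two reduced operators are isospectral and hence carry equal von Neumann entropy. I would record this as the master identity $S(\rho_A(\theta_1))=S(\rho_B(\theta_2))$, valid for every input $\theta$.

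Next I would feed in the independence hypothesis. The left-hand side is a function of $\theta_1$ only and the right-hand side of $\theta_2$ only; for the two to coincide as the keys range over their intended independent domains, the common value must be a constant $s$, so the Schmidt spectrum is frozen across all outputs. This splits the argument. If $s=0$ both marginals are pure, the output factorises, and we fall back on the separable regime already excluded in Lemma~\ref{lemma:no-sep-gadget-d}; moreover the state is then not entangled, contradicting the hypothesis. The substantive branch is $s>0$, where the entanglement is genuine but its amount is fixed.

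In the $s>0$ branch a frozen Schmidt spectrum pushes the entire parameter dependence into the Schmidt \emph{vectors}; since $\rho_A$ may depend only on $\theta_1$ and $\rho_B$ only on $\theta_2$, the $A$-eigenbasis can carry only $\theta_1$ and the $B$-eigenbasis only $\theta_2$. Up to local phases this forces the local-unitary form $|\psi_{\theta_1,\theta_2}\ket=\bigl(V_A(\theta_1)\otimes V_B(\theta_2)\bigr)|\chi\ket$ on a single fixed entangled template $|\chi\ket$ of entropy $s$, which is exactly the statement that the outputs lie on a local-unitary orbit, consistent with local operations being unable to change the entanglement of $|\chi\ket$. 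I would then confront this with the linearity forced by the input: the family $\tfrac{1}{\sqrt 2}(|v_0\ket+e^{i\theta}|v_1\ket)$ is linear in the amplitudes $(1,e^{i\theta})$, whereas $\bigl(V_A(f_1(\theta))\otimes V_B(f_2(\theta))\bigr)|\chi\ket$ is generically a much richer function of $\theta$. Matching the two, together with the blindness requirement~(3) that no measurement of the output reveal $\theta_1,\theta_2$ when $\theta$ is unknown, should collapse the solution onto degenerate cases: constant $V_A,V_B$ (so the marginals no longer encode the keys and the resource is useless or separable), or cloning-type solutions already ruled out in Lemma~\ref{lemma:no-sep-gadget-d}.

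The step I expect to be the main obstacle is precisely this last confrontation, because it hinges on a careful formalisation of condition~(3). The delicate tension is that a marginal which is constant in its key hides it perfectly but then carries no usable computational angle, whereas a marginal that genuinely encodes the key must be reconciled simultaneously with the frozen entropy $s$ and with linearity in $e^{i\theta}$; excluding every intermediate possibility is where the real work lies, and it is the reason the entropy equality (rather than the separability algebra of Lemma~\ref{lemma:no-sep-gadget-d}) is the natural engine here. Finally, as in Lemma~\ref{lemma:no-dist-gadget-channel}, the isometric statement lifts to arbitrary channels via Stinespring dilation, so restricting to isometries costs no generality.
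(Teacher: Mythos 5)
Your engine is the same as the paper's: the output of an isometry on a pure input is a pure bipartite state, the Schmidt decomposition forces $S(\rho_A)=S(\rho_B)$, and the independence hypothesis (left side a function of $\theta_1$ only, right side of $\theta_2$ only) forces this common value to be a constant $s$. Up to that point you and the paper coincide exactly. The divergence is in what happens next. The paper stops there: it asserts that a frozen marginal entropy already contradicts the assumption that $\rho_A$ and $\rho_B$ encode $\theta_1$ and $\theta_2$, the only escapes being a separable output (excluded by hypothesis) or a varying entropy (excluded by Schmidt). You correctly observe that this step is not airtight as stated: a single-qubit density matrix with frozen spectrum can still depend nontrivially on its parameter through its eigenbasis, e.g.\ $\rho_A(\theta_1)=V(\theta_1)\,\mathrm{diag}(p,1-p)\,V(\theta_1)^{\dagger}$, so constant entropy does not by itself kill the encoding. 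Your proposed continuation — push the dependence into the Schmidt vectors, obtain the local-unitary-orbit form $\bigl(V_A(\theta_1)\otimes V_B(\theta_2)\bigr)|\chi\rangle$, and confront it with the linearity of $D|+_{\theta}\rangle=\tfrac{1}{\sqrt 2}\bigl(|v_0\rangle+e^{i\theta}|v_1\rangle\bigr)$ — is the right shape for closing this, but you explicitly leave it unfinished, and it is genuinely nontrivial: one must rule out, for instance, one-parameter families $V_A(\theta_1)$ whose action on the relevant Schmidt vectors happens to be affine in $e^{i\theta}$.

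So the honest summary is: where your argument overlaps with the paper's it matches it, and the additional step you flag as ``the real work'' is a genuine gap in your write-up — but it is a step the paper does not carry out either. The paper effectively reads the independence condition as identifying ``$\rho_A$ encodes $\theta_1$'' with ``$\rho_A$ has $\theta_1$-dependent spectrum,'' under which reading the constant-entropy conclusion is immediately contradictory and your $s>0$ branch is unnecessary. If you adopt that stronger reading of the hypothesis, your proof is already complete at the end of the entropy argument and you should say so; if you insist on the weaker reading (the marginals merely vary with their parameters), then your local-unitary-orbit confrontation must actually be executed, and neither your proposal nor the paper currently does it.
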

\begin{proof}
For the conditions of Equation~\ref{eq:entangled-state-ind-cond} to be satisfied, we need the subsystems $A$ and $B$ to have at least one independent pair of parameters. i.e. they can still be correlated via other parameters potentially included in the description of $D$. Since $|\psi_{\theta_1, \theta_2}\rangle$ is a pure bipartite state, due to the Schmidt decomposition, we know that the entropy of subsystems should always be equal:  
\begin{equation} \label{entropies-equality}
    S(\rho_A)= S(\rho_B) 
\end{equation}
where $S(\rho)= - tr(\rho \log \rho) $ denotes the Von Neumann entropy. For the independence condition to be satisfied, the left-hand side of \ref{entropies-equality} is only a function of $\theta_1$ (and not $\theta_2$), and vice versa for the right-hand side. 
Therefore, the entropy identity implies that:
\[
S(\rho_A(\theta_1)) = S(\rho_B(\theta_2)) =: s,
\]
for some constant $s$ independent of both $\theta_1$ and $\theta_2$. This means that for a given input, $S(\rho_A(\theta_1))$ must be constant for all values of $\theta_1$, and similarly for $\rho_B(\theta_2)$. But this contradicts the assumption that $\rho_A$ and $\rho_B$ encode the parameters $\theta_1$ and $\theta_2$ respectively, unless $\rho_A$ or $\rho_B$ are constant with respect to their associated parameters. This is not the case, as we assume the density matrices of the two subsystems are distinct and are characterisable by parameters $\theta_1$ and $\theta_2$, such that the change of the input state $\theta$ results in changing the output parameters. If the marginals carry no information about $\theta_1$ or $\theta_2$, then no information about these parameters is accessible from any subsystem.

This would imply that either the state $|\psi_{\theta_1,\theta_2}\rangle$ is a separable product state, which is in contradiction with the initial assumption that it is entangled, or that the entropy of the subsystems must change with respect to the change of one of the parameters, which contradicts the Schmidt decomposition, Therefore, no such isometry $D$ can exist.
\end{proof} 

\noindent We are now ready to give the proof of Theorem~\ref{th:main-no-go}.\\

\noindent \textbf{Proof of Theorem~\ref{th:main-no-go}.}  Through Lemma~\ref{lemma:no-sep-gadget-d}, we showed that no isometric map can satisfy all three conditions of Equation~\ref{eq:d-gadget} simultaneously for all input angles $\theta$ when restricted to pure product outputs. In Lemma~\ref{lemma:no-dist-gadget-channel}, we extended the analysis to general quantum channels, allowing mixed separable outputs, and showed that even under this relaxation, such a map remains impossible. Finally, in Lemma~\ref{lemma:no-dist-entangled}, we relaxed the separability condition and considered general entangled outputs, while maintaining condition (3) in the form of independence of output parameters. We demonstrated that no quantum-mechanically valid map satisfies all required conditions in this setting either. These results collectively rule out all possible constructions of $D$, establishing that no such map exists. \qed

So far, we have shown that generating more resource states with independent parameters on the server side is impossible. To complete our argument for justifying Proposition~\ref{prop:ubqc-limitation}, we need to show that the set of assumptions in Theorem~\ref{th:main-no-go} is minimal for UBQC. This means that relaxing the assumption of independence of the parameters will lead to states that cannot be used in UBQC. To this end, we assume that our RSE subroutine is an approximate random state distributor, producing minimally dependent outputs.

We describe such an approximate quantum machine as a process that takes an input state $|+_{\theta}\rangle$ and produces a two-qubit output state, where each subsystem depends on both output parameters. In the spirit of approximate quantum operations — such as approximate quantum cloning~\cite{cerf2000asymmetric,scarani2005quantum} — one can design such a machine to minimize the degree of parameter cross-dependence. While the study of these approximate quantum random state distributors is an interesting direction within quantum information theory, we do not explore their full characterization here and leave it for future work.

Instead, we focus on whether such approximate machines can be useful in the context of UBQC, particularly when their outputs are used to construct graph states, as a means to reduce the quantum communication cost. Suppose an approximate distributor is used to prepare qubits in a graph state $|G\rangle$, where certain angle parameters $\theta_i$ are functionally dependent on others (e.g., $\theta_i = f(\theta_j)$), and this dependency is known to the server. We analyze this setting in the pure-state separable case. We show that using such dependent outputs in UBQC compromises the protocol's blindness. Although the measurement outcomes remain secure due to the one-time pad, the encryption of measurement angles can leak information if the server leverages the known parameter dependency. Therefore, we rule out the use of such approximate subroutines in UBQC through the following lemma. 

\begin{lemma}\label{lemma:no-go-correlated-angles}
Let $|G\rangle$ be a graph state used in a UBQC protocol, where the state of each node is described by a reduced density matrix $\rho(\theta_i)$. Suppose there exist indices $i, j$ such that the angle $\theta_j$ is functionally dependent on $\theta_i$, i.e., $\theta_j = f(\theta_i)$ for some deterministic function $f$, and this dependency is known to the server, even through the values of $\theta_i$ remain hidden. Then, the UBQC protocol may not satisfy blindness.
\end{lemma}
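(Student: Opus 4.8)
The plan is to exhibit an explicit differential attack, since the statement only asserts that blindness \emph{may} fail; producing a single pair of computations that the server can distinguish suffices. First I would isolate the only place where a computation angle enters the server's view: for each computational node $k$ the sole message the server receives is the masked angle $\delta_k = \varphi'_k + \theta_k + r_k\pi$, where $\theta_k$ is the preparation key and $r_k \sample \{0,1\}$. Blindness of the original protocol rests entirely on the fact that the keys $\theta_k$ are independent and uniform over $\{m\pi/4\}_{m=0}^{7}$, so that the family $\{\delta_k\}$ acts as a uniform one-time pad whose joint distribution is independent of $\{\varphi'_k\}$. The dependency $\theta_j = f(\theta_i)$ breaks exactly this independence, and the attack simply extracts the residual correlation.

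Concretely, I would treat the cleanest instance $f = \mathrm{id}$ (so $\theta_j = \theta_i$) and pick two nodes $i,j$ that receive no corrections, e.g. first-layer nodes for which $s_X = s_Z = 0$ and hence $\varphi'_k = \varphi_k$ by Equation~\ref{eq:corrections}. The server then forms the statistic $\Delta := \delta_j - \delta_i = (\varphi_j - \varphi_i) + (r_j - r_i)\pi$, since the common key cancels. Reducing modulo $\pi$ removes the $r$-randomness entirely, leaving $\Delta \equiv \varphi_j - \varphi_i \pmod{\pi}$. Thus the server learns the difference of two genuine computation angles, information that the ideal UBQC resource of Figure~\ref{fig:ideal-ubqc} never leaks.

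To turn this into a formal distinguishing statement I would fix two computations sharing the same admissible leakage (identical graph, flow, and input/output sets) that differ only in these base angles: $C_0$ with $\varphi_j - \varphi_i = 0$ and $C_1$ with $\varphi_j - \varphi_i = \pi/2$. In the real protocol the server deterministically observes $\Delta \equiv 0 \pmod\pi$ under $C_0$ and $\Delta \equiv \pi/2 \pmod\pi$ under $C_1$. Any simulator in the ideal world receives only the leakage $l^{U,\psi_A}$ of Figure~\ref{fig:ideal-ubqc}, which is identical for $C_0$ and $C_1$, and must therefore output the same view in both cases; a distinguisher computing $\Delta \bmod \pi$ then separates the real executions from the simulated ones with advantage at least $1/2$, contradicting $\epsilon$-blindness for $\epsilon < 1/2$. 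For general $f$ the same argument runs at the level of distributions: up to the $r_i,r_j$ shifts, the pair $(\delta_i,\delta_j)$ is supported on the translate by $(\varphi_i,\varphi_j)$ of the one-dimensional curve $\{(\vartheta, f(\vartheta))\}$, and since such a curve cannot tile the torus uniformly, distinct computations induce distinguishable supports.

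The step I expect to be the main obstacle is controlling the outcome-dependent corrections away from the first layer: $\varphi'_w = (-1)^{s_X}\varphi_w + s_Z\pi$ introduces a hidden sign flip on the base angle in addition to $\pi$-shifts. The $s_Z\pi$ terms are harmless because they vanish modulo $\pi$, but the sign $(-1)^{s_X}$, with $s_X$ masked by the honest $r$'s, can scramble $\varphi_j - \varphi_i$ into $\pm\varphi_j \mp \varphi_i$. My plan sidesteps this by selecting correction-free nodes so that $\varphi'_k = \varphi_k$ exactly; a complete treatment of arbitrary node positions would require either restricting to computations that are invariant under these sign flips modulo $\pi$, or averaging the attack statistic over the server-known flow to recover a computation-dependent quantity.
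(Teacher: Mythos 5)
Your proposal is correct and follows the same route as the paper: both reduce blindness to the security of a one-time pad with correlated keys and break it with a differential attack on the pair $(\delta_i,\delta_j)$. Your version is in fact more explicit than the paper's, which only gestures at classical related-key cryptanalysis; your concrete statistic $\Delta \bmod \pi$, the choice of correction-free nodes to neutralize the $(-1)^{s_X}$ sign flips, and the explicit pair $C_0, C_1$ with $\varphi_j-\varphi_i \in \{0,\pi/2\}$ turn the paper's sketch into a complete distinguishing argument with advantage at least $1/2$.
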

\begin{proof}
We prove the lemma by showing that a UBQC protocol using such a graph state $|G\rangle$ cannot guarantee blindness. In particular, the blindness condition reduces to the security of a classical one-time pad with correlated keys, which is, in general, insecure. As a result, the server can gain non-trivial information about the client's encrypted measurement angles, thereby violating the blindness requirement.

UBQC protocols employ two layers of encryption: \emph{1) Encryption of Measurement Outcomes}, and \emph{2) Encryption of Measurement Angles}.

The first encryption operates as follows: for a measurement on qubit $i$, the outcome $s_i$ is one-time padded with a random classical bit $r_i$, resulting in $b_i = s_i + r_i$. The server receives $b_i$. This encryption remains secure regardless of any dependencies among the $\theta_i$, because $r_i$ is chosen uniformly at random, independently of $\theta_i$, and remains unknown to the server. Therefore, $b_i$ reveals no information about $s_i$.

The second encryption works as follows: the client instructs the server to measure qubit $i$ at angle $\delta_i = \phi_i + \theta_i + r_i \pi$. Here, the server receives $\delta_i$ and is assumed to know the dependency structure among the $\theta_i$, for example, $\theta_j = f(\theta_i)$. Since the $\theta_i$ serve as keys in this one-time pad, the security of the angle encryption reduces to the security of a one-time pad with correlated keys. It is well known in classical cryptanalysis that such schemes are vulnerable to differential attacks~\cite{biryukov2009related,filiol2020key}. For instance, if $\theta_j = f(\theta_i)$, the server can analyze the pair $(\delta_i, \delta_j)$. Given knowledge of $f$, the server can mount differential attacks to infer relationships between $\phi_i$ and $\phi_j$, leaking information about the underlying computation. Consequently, the second layer of encryption is compromised. More generally, no UBQC protocol relying on information-theoretic encryption can satisfy blindness when using a resource state $|G\rangle$ of the form described in the lemma. This completes the proof.
\end{proof}
The consequence of the above lemma is that, even if an optimal approximate RSE machine exists, it cannot be employed within a UBQC protocol to reduce communication resources, as it would compromise the security of the protocol. Furthermore, we observe that any attempt to reduce the communication overhead of UBQC without introducing additional assumptions, i.e., relying solely on information-theoretic constructions, while involving a local process by the server, must involve a subroutine that falls within the class of processes described in Theorem~\ref{th:main-no-go}. However, as proven, no such subroutine exists. Alternatively, such attempts would result in a graph state as described in Lemma~\ref{lemma:no-go-correlated-angles}, which violates the blindness condition, thus failing to preserve the security of UBQC. This leads to the conclusion stated in Proposition~\ref{prop:ubqc-limitation}.

We finish this section with a couple of remarks to point out some interesting conceptual aspects of our no-go results.\\ 

\begin{remark}
    \textbf{(Distinction from Entropy Expansion):} It is important to emphasize that the functionality we aim to achieve in Theorem~\ref{th:main-no-go} is not equivalent to entropy expansion, but rather to a distribution or splitting process of a random bitstring into different states. Thus, it cannot be trivially addressed by entropic arguments. In other words, the process does not seek to generate additional random bits from a bitstring encoded in $\theta$ without any extra resources. Instead, it splits the same amount of randomness into two states, with the outputs being non-trivially linked to the inputs. This distinction allows the process to remain consistent with entropic principles, thereby highlighting the non-triviality of our no-go result
\end{remark}

\begin{remark}
 \textbf{(Distinction from the No-Cloning Theorem):} The functionality described in Theorem~\ref{th:main-no-go} may remind us the no-cloning theorem~\cite{wootters1982single}. In this context, we are again considering a machine $C: L(H_A) \longrightarrow L(H_B \otimes H_C)$, which operates as a cloner, such that $C(\rho) = \rho \otimes \rho$. This raises the question of whether the no-distribution theorem we established earlier can be reduced to the no-cloning theorem. We will argue that this is not the case, and this distinction can be seen from two key perspectives:\\
\emph{1.} First, although both theorems are based on the linearity of quantum mechanics, the no-randomness distribution theorem is more general. In the proof of Lemma~\ref{lemma:no-sep-gadget-d}, we recover a cloning scenario for a fixed parameter $\delta = \pi$, while for a general range of the parameter, our machine is allowed to produce more general outputs other than clones. We can also refer to our second proof of Lemma~\ref{lemma:no-sep-gadget-d} in Appendix~\ref{sec:alternative-proof-lemma-1} for a similar argument based on dimensionality, which again subsumes the perfect cloning machines.\\
\emph{2.} One might attempt to reduce Theorem~\ref{th:main-no-go} to the no-cloning theorem by suggesting that if quantum process existed, it could be transformed into a cloning machine by applying local rotations, i.e., $Z(\theta - \theta_1) \otimes Z(\theta - \theta_2)$. Thus, the no-go on RSE would follow from the no-cloning theorem. However, this reduction fails for one key reason: the unknown nature of the input. An essential assumption in both our theorem and the no-cloning theorem is that the input state, which is intended to be cloned or distributed, is unknown. Consequently, a local rotation of the form $Z(\theta - \theta_1) \otimes Z(\theta - \theta_2)$ cannot be applied, since we do not know the value of the input angle $\theta$.\\
Therefore, while the no-randomness distribution theorem shares a common root with the no-cloning theorem, namely, the linearity of quantum mechanics (which can also be interpreted as the difficulty of disentanglement), it cannot be directly reduced to the no-cloning theorem, and can be seen as a new fundamental no-go result in quantum information.
\end{remark}

\begin{remark}
\textbf{(On the approximate randomness distributor):} The concept of an approximate randomness distributor, as motivated by the previous lemma, is not the primary focus of this work. However, it offers a promising direction for future research into the properties of quantum machines. Specifically, it refers to a machine that generates a bipartite entangled state, where the correlation between the subsystems is minimized. In this context, we examine the correlation between the parameters that each state depends on. Unlike the cloning case, where the objective is to maximize fidelity, the appropriate measure of correlation in this case is not straightforward. To quantify the functionality of an approximate randomness distributor, suitable measures of correlation, such as quantum mutual information \cite{wilde2013quantum} or quantum discord \cite{ollivier2001quantum}, could serve as potential candidates.
\end{remark}


\subsection{A Lower Bound on Communication Complexity of UBQC Based on a Counting Argument}
In the previous section, we showed an important limitation for reducing the quantum communication of UBQC through the impossibility of RSE. In this section, we use a counting argument to establish a lower bound on the number of qubits that must be communicated for a blind quantum computing protocol to remain secure. This can also be interpreted as a lower bound for any collective generalization of the randomness-distribution machine discussed earlier.\\


The lower bound is derived by considering a simple attack strategy that a malicious server could employ to compromise the security of a blind quantum computing protocol. Specifically, the server could randomly guess the input parameters $\{\theta_i\}_{i=1,2,...,m}$ (the keys) and then implement and interpret the corresponding computation based on those values. We aim to identify a necessary condition for ensuring the security of a UBQC protocol against this type of attack. To illustrate this concept, consider the case where the only input the server receives from the client is a single-keyed qubit $|+_{\theta}\rangle$. In this scenario, the server could randomly guess the parameter $\theta$ and proceed to implement and interpret the computation accordingly. In such a case, the server would correctly guess the computation with a probability of $\frac{1}{8}$, which is a non-negligible probability. We formalise this argument in the following lemma.

\begin{lemma}\label{lemma:ubqc-lower-bound}
The number of qubits $m$ needed to be communicated for a blind quantum computation is lower bounded by $\omega (\log n)$, where $n$ is the number of vertices of the underlying graph state.   
\end{lemma}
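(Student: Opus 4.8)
The plan is to prove the bound by contraposition: I would exhibit an explicit guessing attack whose success probability is an exact function of $m$, and then show that unless $m = \omega(\log n)$ this quantity is non-negligible in $n$, contradicting blindness. The attack is the one sketched just before the lemma. Since the only information hiding the computation resides in the $m$ secret key-parameters $\{\theta_i\}_{i=1}^m$, each drawn from the size-$8$ set $\{k\pi/4 : k=0,\dots,7\}$ (or, more generally, a finite alphabet of constant size), a counting observation bounds the number of possible key-tuples by $8^m$. A malicious server samples a guess $\{\hat\theta_i\}$ uniformly from these $8^m$ possibilities and uses it to decrypt the transmitted angles $\delta_i = \phi'_i + \theta_i + r_i\pi$ and reconstruct a candidate computation. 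A one-line computation shows that, whatever the true key distribution is, the uniform guess coincides with the real keys with probability exactly $8^{-m}$, independently of how the client samples them.

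Next I would convert this into a distinguishing advantage against the ideal UBQC resource of Figure~\ref{fig:ideal-ubqc}. Fixing two admissible computations $C_0 \neq C_1$ of graph size $n$ that the functionality is required to hide from one another, the distinguisher decrypts with its random guess, outputs the matching bit whenever the decryption is consistent with one of them, and a fair coin otherwise. A short case analysis (guess correct with probability $8^{-m}$, random otherwise) yields a distinguishing advantage of exactly $8^{-m} = 2^{-3m}$. Blindness in the Abstract Cryptography sense requires this advantage to be negligible in the problem size $n$; but if $m = O(\log n)$ then $2^{-3m} \geq n^{-c}$ for some constant $c$, which is inverse-polynomial and therefore non-negligible, a contradiction. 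Hence every blind protocol must communicate $m = \omega(\log n)$ qubits.

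The main obstacle is the conceptual, rather than computational, step of relating the combinatorial ``guess-the-keys'' probability to the formal blindness advantage in the AC framework, and of justifying that negligibility is to be measured in $n$ rather than in a separate security parameter. One must also verify that a correct key guess genuinely lets the server separate two admissible computations — i.e.\ that this leakage survives the output one-time pad rather than being masked by it — and handle possibly non-uniform or correlated key distributions, which the uniform guess over all $8^m$ tuples neatly sidesteps, since its success probability is exactly $8^{-m}$ regardless of the true distribution.
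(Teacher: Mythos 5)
Your proposal is correct and follows essentially the same route as the paper: the server guesses the $m$ hidden angles uniformly from the $|\Theta|^m$ possibilities, succeeds with probability $|\Theta|^{-m}$, and blindness forces this to be negligible in $n$, giving $m=\omega(\log n)$. Your version is slightly more explicit than the paper's in converting the guessing probability into a formal distinguishing advantage against the ideal resource, but the underlying counting argument and conclusion are identical.
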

\begin{proof}
Let $n$ be the number of vertices in the underlying graph state for UBQC. Each of these vertices corresponds to a $|+_{\theta}\rangle$ state where $\theta \in \Theta$ and where $\Theta$ is a finite set. Assume a subset of these qubits are not blinded and known to the server, and some others are encrypted and sent by the client.  If the number of sent qubits from the client to the server is m, then there are $|\Theta|^m$ possibilities for the whole graph state, which we denote each by the set $\{G_i\}_{i=1} ^{|\Theta|^m}$. As the probability distribution on this is uniform when the server guesses one of them randomly, the probability of each of these graphs ($G_i$) to be the desired graph state ($G^*$) which the client wishes to be executed is exponentially decreasing in $m$, and it needs to be a negligible function of $n$, the size of the graph (proportional to the size of computation) for the protocol to stay secure against random guess attack, which means we should have: 
\begin{equation}
    Pr[G(i)= G^*] = \frac{1}{|\Theta|^m} = negl(n)
\end{equation}
Solving this asymptotic equality for $m$ easily yields $m =\omega (\log n)$.
\end{proof}

\noindent This argument can be generalized in two ways:  
\begin{enumerate}
    \item If the protocol uses other types of states that are not $|+_{\theta}\rangle$, but still contain $m$ independent parameters (e.g., a highly entangled GHZ state with $m$ rotated qubits, where the rotation angles of each pair of qubits are independent), the same argument holds. Specifically, the number of possible underlying graph states will be exponentially large ($|\Theta|^m$). This lower bound arises from the fact that, as the size of the graph increases, the dimensionality of the input parameter space must also increase to keep the success probability of the random guess strategy negligible with respect to the computation size.
    \item If the set $\Theta$ is not discrete but continuous (and symmetric, as it should be), the claim still holds approximately. In this case, the server can discretize the space of input states by dividing it into $|\Theta|$ regions, each with a length of $\epsilon = \frac{2\pi}{|\Theta|}$. For each of the unknown angles $\theta_i$, the server will randomly guess an index $n_i \in \{0,1,\dots,|\Theta|\}$ and select a random angle $n_i \epsilon \leq \theta_i^{\sim} \leq (n_i + 1) \epsilon$. If all their guesses are correct, they can successfully execute the computation, since $|\theta_i - \theta_i^{\sim}| \leq \epsilon$, and for sufficiently small $\epsilon$, the statistics of measurement outcomes corresponding to these two angles will be similar. Thus, in this case, the dimension of the parameter space must grow similarly to prevent the random guess attack from succeeding.
\end{enumerate}
Our lower bound is also consistent with other types of blind quantum computing protocols that utilize resources other than $|+_{\theta}\rangle$. For example, the authors in \cite{giovannetti2013efficient} proposed an efficient blind quantum computing protocol with a communication complexity of $O(J \log n)$, where $n$ is the number of qubits needed for the computation (i.e., the size of the graph state in our case), and $J$ is the computational depth. This upper bound is also consistent with our result and can be derived from the no-programming theorem \cite{nielsen1997programmable}.\\

\section{Possibility Results: Selective blindness}
\label{sec:selectively-blind-quantum-computing}

In earlier sections, the client's goal was to delegate an arbitrary computation to the server, with the server only learning an upper-bound on the size of the computation. In this section we assume that the client wishes to delegate one of two known computations. The server should not be able to guess with probability higher than $1/2$ which of these two computations the client decided to perform. This covers use-cases in which the Client is willing to leak some information about the computation they are performing.\footnote{This is the case for instance if the server knows already the ansatz that the client wants to use is a parametrized quantum circuit (for some QML application) but the client does not want to leak the parameters of the trained model.} We show that restricting our security guarantees in this way allows us to decrease drastically the qubit communication complexity of the protocol.

We first describe how to mask all possible differences between two unitary transformations expressed as MBQC measurement patterns and then give the protocol which achieves this new masking goal. Our protocol can be generalized in a straightforward way to allow the client to delegate one computation within a family of $n$ unitaries without revealing which one has been performed. This is done by considering the set of all pairwise differences between the measurement patterns in the set and masking them accordingly.

In the language of Abstract Cryptography, we design a protocol realizing the ideal functionality of \emph{Selectively Blind Quantum Computation} described in Figure~\ref{fig:sdqc}.
\begin{figure}[ht]
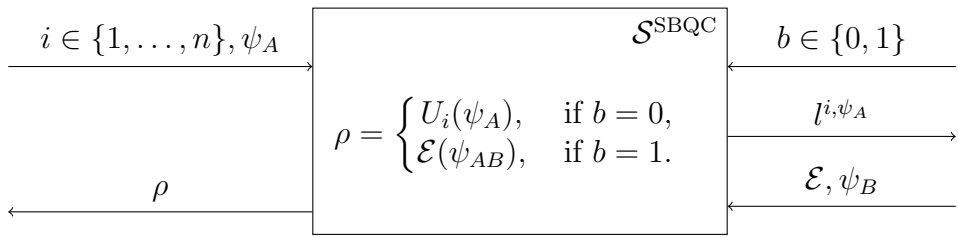

\centering
    \begin{bbrenv}{resource}
    \begin{bbrbox}[name=$\mathcal{S}^\text{SBQC}$, minheight=3cm]
        $\rho = \left\{ \begin{matrix}
             U_i(\psi_A), & \text{ if } b=0, \\
            \mathcal{E}(\psi_{AB}), & \text{ if } b=1.
        \end{matrix} \right.$
    \end{bbrbox}
    \bbrmsgspace{6mm}
    \bbrmsgto{top={$i\in\{1,\dots,n\}, \psi_A$}, length=4cm}
    \bbrmsgspace{12mm}
    \bbrmsgfrom{top={$\rho$}, length=4cm}
    \bbrqryspace{6mm}
    \bbrqryfrom{top={$b\in\bin$}, length=3cm}
    \bbrqryspace{2mm}
    \bbrqryto{top={$l^{i,\psi_A}$}, length=3cm}
    \bbrqryspace{2mm}
    \bbrqryfrom{top={$\mathcal{E},\psi_B$}, length=3cm}
    \end{bbrenv}
\caption{The ideal functionality of \emph{Selectively Blind Quantum Computation}. Note, that the selection of unitaries $\{U_i\}_{i=1,\dots,n}$ is fixed as a public parameter of the functionality, which is available to all participating parties. An honestly participating server does not have access to its interfaces, and its input $b$ is filtered to be $0$.}
\label{fig:sdqc}
\end{figure}

\subsection{Angle Masking Techniques}\label{sec:angle_masking}

In this section we describe how to hide a difference in a default measurement angle of a node in the computation. We assume that the client wishes to delegate either $U_0$ or $U_1$ on input $|0^m\ket$ for input register size $m$, whose measurement patterns when expressed in MBQC differ only in the default measurement angle at node $v$: it is $\varphi_0$ for $U_0$ and $\varphi_1$ for $U_1$. They have the same resource graph noted $G$ and the same g-flow $g$. We assume that the server knows the descriptions of $U_0$ and  $U_1$ including $\varphi_0, \varphi_1$.

\subsubsection{Target Node}

We call $v$ a \emph{target node}. 
To mask the associated angle we apply the technique of the original UBQC described in the Section \ref{subsec:ubqc}: the client sends a qubit $|+_{\theta_v}\ket$ and during the computation it sends the associated angle $\delta_v$. This perfectly hides the measurement angle for this node. 

However, this is not sufficient to hide the outcome of the computation. We must also mask all the qubits influenced by the measurement on qubit $v$, otherwise the server may use the difference in outputs to distinguish $U_0$ and $U_1$. As explained in Section \ref{subsec:mbqc} the measurement outcome $s_v$ propagates through the graph in the form of measurement corrections. For example, the target node induces an X-correction on the qubits at positions $w \in g(v)$. If this is the only correction for one such node $w$, then $w$ has to be measured with the updated angle $(-1)^{s_v} \varphi_{w}$. The Server is aware of the default measurement angle for node $w$ and they need to know the updated angle to perform the computation, hence they can calculate $s_v$.\footnote{If $w$ receives corrections from other nodes, the other correction bits are known to the server and they can again recover $s_v$ with a simple XOR.} This means that the measurement angles that depend on $s_v$ have to be masked.

As seen in Figure \ref{fig:gflow}, the target node propagates different corrections to different qubits. We analyze which masking techniques are required depending on the type of correction and the value of the measurement angle of a qubit.

\subsubsection{X Corrections} 

Consider $w \in g(v)$. The corrected angle for it is defined in the Equation \ref{eq:corrections}. Since $v$ belongs to $s_X$ of $w$, the value of $s_v$ has the effect of changing the sign of $\varphi_w$. We consider the following three cases depending on the value of $\varphi_w$.

\paragraph{No effect.} If $\varphi_{w} \in \{0, \pi\} $, the value $(-1)^{s_{v}} \varphi_{w}$ does not change depending on $s_{v}$ and we do not need to mask it.

\paragraph{Classical hiding.} Suppose now that $\varphi_{w} \in \{\frac{\pi}{2}, \frac{3\pi}{2}\} = \{ \frac{\pi}{2} + r_1 \pi | r_1 \in \{0,1\} \}$. The X-correction maps these angle to one another. To mask this change it is enough to sample $r_2 \sample \{0,1\}$ and ask the qubit to be measured with angle $\varphi''_w \coloneqq \varphi'_{w} + r_2 \pi$. Then we have
\[
    \varphi''_w =  \frac{\pi}{2} + (r_1 +r_2) \pi  + s_Z\pi
\]
It is easy to see that $r_1$ is perfectly hidden. Also, we can deduce the result of the original measurement by flipping the outcome transmitted by the Server when $r_2 = 1$ or keeping it as it is when $r_2 = 0$.

\paragraph{Quantum hiding.} Lastly, assume $\varphi_{w} \in \{\frac{(2k+1)\pi}{4}| k=0, \dots, 3 \} $. For these nodes the X-correction changes the angle from $\pi/4$ to $7\pi/4$ or from $3\pi/4$ to $5\pi/4$. It is not possible to deterministically deduce the result of a measurement in the $\pi/4$ basis from a measurement outcome in the $7\pi/4$ basis and vice versa, and similarly for the other two bases, since they differ by a value of $\pi/2$. Hence, to mask them the client sends a $| +_{\frac{r_2\pi}{2}} \ket$ qubit to the Server, where $r_2 \sample \{0,\dots,3\}$ and change the measurement angle to $\varphi''_{w} = \varphi'_{w} + \frac{r_2\pi}{2} + r_3 \pi$. Here $r_3 \sample \{0,1\}$ is hiding the measurement outcome. The resulting measurement angle is uniformly random among the four possible choices, so it is perfectly hidden. This last case therefore incurs a communication cost of $1$ qubit.

\begin{remark}
    Note that X- and Z-corrections do not change the angle type. That is why instead of talking about the type of the corrected angle $(-1)^{s_X} \varphi_w + s_Z\pi$ we just consider the default value $\varphi_w$.
\end{remark}

\subsubsection{Z Corrections}

We now consider nodes $u$ where $u\in Odd(g(v))$. If $s_v$ is equal to $1$ it has an effect of adding $\pi$ to the measurement angle of $u$. To achieve indistinguishability, the Client transmits a corrected measurement angle $\varphi''_{u} = \varphi'_u + r\pi$ with $r \sample \{0,1\}$. Since $r$ is not known to the adversary, the value of $s_{v}$ in $\varphi'_u$ is perfectly hidden. The real measurement result can be deduced similarly to the previous cases.

\subsubsection{Propagating Corrections}

Similarly to the corrections coming directly from the target node, we need to mask the ones produced by the dependent nodes of the target. We define the \emph{future cone} as the set of vertices that need to be masked. Intuitively, the future cone of a node $v$ contains all the qubits whose measuring angle depends on the measurement outcome of $v$. See Figure \ref{fig:corrections} for an example of a future cone.

\begin{definition}[Future cone]
\label{def:future_cone}
The \emph{future cone} of a node $v$ in a resource graph for MBQC computation is defined recursively. A qubit $w$ belongs to the future cone of $v$ if
\begin{itemize}
    \item It belongs to the set of X or Z dependencies of $v$.
    \item It belongs to the set of X or Z dependencies of a node in the future cone.
\end{itemize}
The set of qubits that receive an X-correction from the target node or from the other nodes of the future cone are called the \emph{interior} of the future cone.
\end{definition}

\begin{definition}[Qubit-masked node]
A node $w$ is called \emph{qubit-masked} if it belongs to the interior of the future cone of the target qubit and its default measurement angle belongs to the set $\{\frac{\pi}{4},\frac{3\pi}{4},\frac{5\pi}{4},\frac{7\pi}{4}\}$.
\end{definition}

\begin{figure}[ht]
\centering
\begin{tikzpicture}
  \def\rows{6}
  \def\cols{6}
  \foreach \x in {1,...,\cols} {
    \foreach \y in {1,...,\rows} {
      \def\fillcolor{white}
      \pgfmathtruncatemacro{\dx}{\x - 2}
      \pgfmathtruncatemacro{\dy}{\y - 3}
      \ifnum\x=2
        \ifnum\y=3
          \def\fillcolor{red!80!black}
        \fi
      \else
        \ifnum\dx>0
          \ifnum\dy=\dx
            \def\fillcolor{blue!80!black}
          \else
            \ifnum\dy=-\dx
              \def\fillcolor{blue!80!black}
            \else
              \ifnum\dy<\dx
                \ifnum\dy>-\dx
                  \def\fillcolor{green!50!black}
                \fi
              \fi
            \fi
          \fi
        \fi
      \fi
      \node[draw, circle, fill=\fillcolor, minimum size=0.4cm, inner sep=0pt] (N\x\y) at (\x, \y) {};
    }
  }
  \foreach \x in {1,...,\cols} {
    \foreach \y in {1,...,\rows} {
      \pgfmathtruncatemacro{\xp}{\x + 1}
      \pgfmathtruncatemacro{\yp}{\y + 1}
      \ifnum\x<\cols
        \draw (N\x\y) -- (N\xp\y);
      \fi
      \ifnum\y<\rows
        \draw (N\x\y) -- (N\x\yp);
      \fi
    }
  }
\end{tikzpicture}
\caption{The future cone of the red node is shown in green and blue. The green nodes denote the interior of the future cone. The term is inspired by the shape of this dependence set in the resource graph equal to the $\mathbb{Z}_2$ lattice with the g-flow $g(x,y) = \{(x,y+1)\}$.}
\label{fig:corrections}
\end{figure}
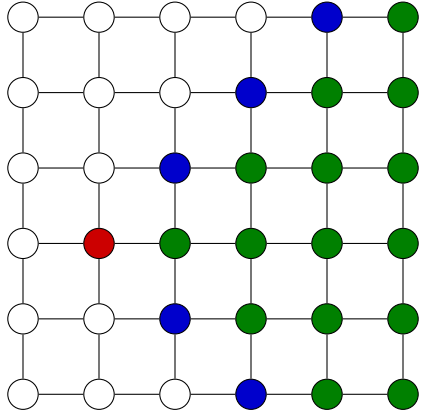

\begin{recap}
\vspace*{1ex}
\caption{Angle hiding procedure.}
\begin{itemize}
    \item A target node generates a future cone.
    \item The target node and qubit-masked nodes are masked by having the client send additional qubits and adapting the measurement angle.
    \item The client hides the other nodes of the cone classically by adding a random $r\pi$ rotation to the measurement angle.
\end{itemize}
\end{recap}

\begin{remark}[Generalization to private quantum input.]\label{sec:private_input}
To hide a quantum input we use the quantum one time pad $Z(\theta_i)X^{a_i}(\rho_{in})$. For every index $i$ we can see the part $a_i$ of the secret key as an X-correction to an input qubit. Since these qubits receive this X-correction and some of them might be qubit-masked we hide the effect of this correction using $\theta_i$. That is why instead of a simple $Z^{b_i}$ we use $Z(\theta_i)$ in the encryption.

We need to hide the measurement outcome of these qubits. We can see this scenario as an instance of angle hiding where all input nodes are target nodes. The quantum communication cost in this case is equal to
\begin{align*}
    \#\{\text{input/output qubits}\} + \# \left\{ \frac{(2k+1)\pi}{4} \text{ angles on dependent qubits} \right\}.
\end{align*}
\end{remark}

\begin{remark}[Further optimizations.]
Note that we can optimize the angle hiding algorithm even further for the qubit that is sent for the target node. Consider the case when $\varphi_0 - \varphi_1 = \pi \pmod{2\pi}$ and both angles $\varphi_0$ and $\varphi_1$ are Clifford angles. Then the corrections can only change these angles by $\pi$. Therefore, this difference can be concealed using classical masking and we do not need to send a qubit for this node. Nevertheless, we will have to mask the future cone of this node.

Additionally, for most of the other cases it is enough to send a qubit of the form $|+_{\theta}\ket$ where $\theta \in \frac{k\pi}{2}, k = 0, \dots, 3$. The only time when all 8 possibilities are required is when the difference $\varphi_0 - \varphi_1 = \frac{(2k+1)\pi}{4} \pmod{2\pi}$, with $k=0, \dots, 3$.
\end{remark}

\subsection{Graph Masking Techniques}\label{sec:graph_masking}

Suppose now that the two unitaries $U_0 \coloneqq (G_0, g_0, (\varphi^{0}_{i})_{i=0}^{n_0})$ and $U_1 \coloneqq (G_1, g_1, (\varphi^{1}_{i})_{i=0}^{n_1})$ that have different resource graphs, g-flow functions, input and output spaces, but all default measurement angles are equal to $0$: $\forall i \forall b: \varphi^{b}_{i} = 0$. We add this constraint to only deal with the difference in the graphs and g-flows as we have already discussed how to mask the default angles in Section \ref{sec:angle_masking}.

We first give a necessary and sufficient condition for merging $G_0$ and $G_1$ into one graph. Then we describe a number of graph masking techniques that can be applied to the merger of $G_0$ and $G_1$ to make the computations defined on these graphs indistinguishable. We leave the question of finding the optimal merger graph for future work.

\subsubsection{Condition for Merging Graphs.}

Suppose we are given a pair of resource graphs with g-flows on them $(G_0, g_0, \prec_0)$ and  $(G_1, g_1, \prec_1)$. We define a third graph $(G_M, \prec_{G_M})$ such that it ``contains'' both $G_0$ and $G_1$ and has an order of measurements that does not violate the orderings on $G_0$ and on $G_1$.

Having $G_M$ we can use the techniques described below to reduce $G_M$ to $G_0$ or $G_1$ following the choice of the client, while keeping the measurement order the same for both computations.

\begin{definition}[Merger graph]
\label{def:merger_graph}
We define a graph $G_M$ to be the merger of two given graphs $G_0$ and $G_1$ with their respective partial orders $\prec_0$ and $\prec_1$ if it has the following properties
\begin{enumerate}
    \item \label{merging:emb1} $\exists$ an embedding function $i_0: G_0 \rightarrow G_M$ preserving edges.
    \item \label{merging:emb2} $\exists$ an embedding function $i_1: G_1 \rightarrow G_M$ preserving edges.
    \item \label{merging:order} $\exists$ a total order $\prec_{G_M}$ such that, if $a \prec_0 b$, for some $a,b \in G_0$ than $i_0(a) \prec_{G_M} i_0(b)$ and similarly for $G_1$.
\end{enumerate}
\end{definition}
\noindent In the rest of the document we will abuse notation and say that some of the nodes of $G_M$ belong to $G_0$ or $G_1$ instead of saying that they are an image of a node from $G_0$ or $G_1$ under an embedding function.

A trivial instantiation of $G_M$ is a graph that contains $G_0$ and $G_1$ as two non intersecting components. Then one can measure all the non-output vertices of $G_0$ first and followed by all of the non-output vertices of $G_1$. This approach is obviously not the most efficient. In this section we set the necessary and sufficient conditions on $G_M$ to be a merger of $G_0$ and $G_1$.

\begin{lemma}
\label{lem:merger_graph}
    A graph $G_M$ is a merger of $G_0$ and $G_1$ if and only if the transitive closure of $\prec_0 \cup  \prec_1$ (after the respective renaming of vertices) is a strict partial order on $G_M$ i.e. it is irreflexive, antisymmetric and transitive.
\end{lemma}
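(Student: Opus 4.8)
The plan is to recognize that this lemma is, at its core, the classical fact that a relation admits a linear (total) extension precisely when it has no cycles, equivalently when its transitive closure is a strict partial order. The graph-theoretic dressing is inessential here: the edge-preserving embeddings $i_0, i_1$ of Definition~\ref{def:merger_graph} are taken as given (they are exactly what ``renames'' the vertices of $G_0$ and $G_1$ inside $G_M$), so conditions~\ref{merging:emb1} and~\ref{merging:emb2} hold by hypothesis. The entire content of the biconditional is therefore the existence of a total order $\prec_{G_M}$ on $G_M$ satisfying condition~\ref{merging:order}. First I would fix notation: write $R$ for the relation on the vertices of $G_M$ obtained as the union of $\prec_0$ and $\prec_1$ transported along $i_0$ and $i_1$, and let $P$ be its transitive closure. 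The goal then reduces to showing that a total order extending $R$ exists if and only if $P$ is a strict partial order.

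For the forward direction, suppose $G_M$ is a merger, so there is a strict total order $\prec_{G_M}$ with $R \subseteq \prec_{G_M}$. Since $\prec_{G_M}$ is transitive, the smallest transitive relation containing $R$, namely $P$, also satisfies $P \subseteq \prec_{G_M}$. A strict total order is irreflexive and asymmetric, and every subrelation inherits both properties; combined with the transitivity of $P$ (which holds by construction), this shows $P$ is a strict partial order.

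For the converse, assume $P$ is a strict partial order. As the vertex set of $G_M$ is finite, I would build a linear extension of $P$ by the standard topological sort: repeatedly extract a $P$-minimal vertex (one exists in any nonempty finite poset), append it to the ordering, and recurse on the remainder. The resulting total order $\prec_{G_M}$ contains $P \supseteq R$, which is exactly condition~\ref{merging:order}; together with the assumed embeddings this exhibits $G_M$ as a merger. One could equally invoke Szpilrajn's extension theorem for the general existence of a linear extension.

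The only delicate point is the converse, and specifically the guarantee that every nonempty finite strict partial order has a minimal element, so that the topological sort never stalls. This is precisely where irreflexivity of $P$ is used: a failure would mean $P$ relates some vertex to itself, i.e.\ $R$ contains a cycle, which is exactly the kind of conflict produced when, say, $a \prec_0 b$ while $b \prec_1 a$ after identification in $G_M$. I would also record that asymmetry of $P$ is automatic from irreflexivity together with transitivity, so in applications the single substantive condition to verify is the acyclicity of the combined order.
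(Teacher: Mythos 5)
Your proof is correct and follows essentially the same route as the paper's: the forward direction observes that the combined relation (and hence its transitive closure) sits inside the total order $\prec_{G_M}$ and so inherits irreflexivity and antisymmetry, and the converse invokes a linear extension of the closure (the paper cites the order-extension principle where you use a topological sort on the finite vertex set, which is the same tool). Your write-up is in fact slightly cleaner on the forward direction, since you note directly that transitivity of $\prec_{G_M}$ forces the whole transitive closure, not just the union, to be a subrelation of it.
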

\begin{proof}
    Firstly, note that irreflexivity of the union of two partial orders is obvious, so we are not going to discuss it in the proof.

    We assume $G_M$ is a merger of $G_0$ and $G_1$ and prove that the transitive closure of $\prec_u \coloneqq \prec_0 \cup  \prec_1$ is a partial order. By the definition, two arbitrary elements $a,b \in G_M$ have $a \prec_u b$ if and only if $i_0^{-1}(a) \prec_0 i_0^{-1}(b)$ or $i_1^{-1}(a) \prec_1 i_1^{-1}(b)$.

    Take the total order $\prec_{G_M}$ from part \ref{merging:order} of the merger graph definition. From the statement above, we can see that if $a \prec_u b$ then $a \prec_{G_M} b$. Therefore, $\prec_u$ is a suborder of $\prec_{G_M}$ hence it cannot have relations that violate transitivity or antisymmetry.

    Nevertheless, it is possible that we have $a \prec_u b$ and $b \prec_u c$ but nothing for $a$ and $c$. It can happen if $b$ has a preimage in $G_0$ and in $G_1$ and the first relation comes from $\prec_0$ but the second from $\prec_1$. The relations that complete transitivity can be safely added to $\prec_u$ since they are present in $\prec_{G_M}$. In conclusion, we have obtained that a transitive closure of $\prec_u$ is a strict partial order.

    Suppose now that we have some embeddings for $G_0$ and $G_1$ into $G_M$ and $\prec_u \coloneqq \prec_0 \cup \prec_1$ is a strict partial order. We would like to prove that there exists a total order compatible with $\prec_0$ and $\prec_1$. Let us take $\prec_{G_M}$ to be a linear extension of $\prec_u$ (it exists by the order-extension principle) and check if property \ref{merging:order} holds. For an arbitrary pair $a,b \in G_0$ if we have $a \prec_0 b$ then $i_0(a) \prec_u i_0(b)$ and therefore $i_0(a) \prec_{G_M} i_0(b)$. Similarly for $a,b \in G_1$ if we have $a \prec_1 b$ then $i_1(a) \prec_u i_1(b)$ and so $i_1(a) \prec_{G_M} i_1(b)$. Hence $G_M$ is a merger of $G_0$ and $G_1$.

\end{proof}

\begin{remark}
    Note that the ordering on the merger does not depend on g-flow functions $g_0$ and $g_1$. These functions guarantee that $\prec_0$ and $\prec_1$ satisfy the properties needed for the computation to be deterministic. Therefore every extension of $\prec_i$ to a total order will be compatible with computation $i$. It is thus sufficient to find a total order that is an extension of both $\prec_0$ and $\prec_1$.
\end{remark}

\subsubsection{Masking the Edges.}
\label{subsec:shape diff}

In this section we only consider the case where $G_0$, $G_1$ and $G_M$ have the same number of vertices and the same input and output nodes, so after they are merged we only need to delete edges from the merger graph to obtain either $G_0$ or $G_1$. The technique we use in this section utilizes the bridge and break protocol, firstly developed by \cite{MPKK18}.
Suppose we have three nodes of the resource graph prepared as in Figure~\ref{subfig:bridge-break-input}.

The bridge operation deletes the vertex in the middle and creates an edge between the side vertices. The break operation also deletes the middle vertex but leaves the side vertices disconnected.

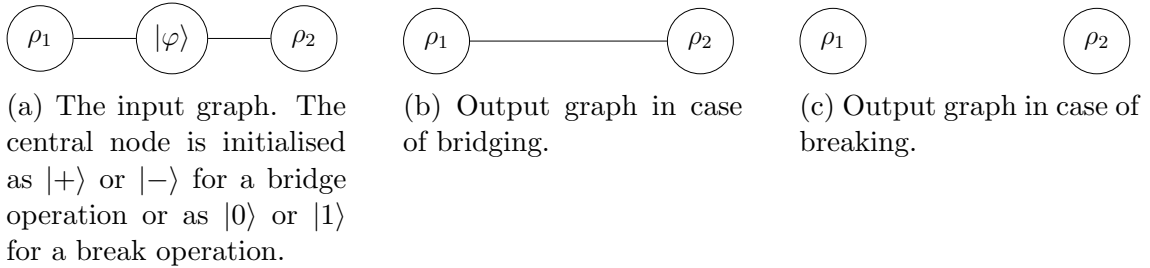
\begin{figure}[ht]
\centering
\begin{subfigure}[t]{0.3\textwidth}
\resizebox{\textwidth}{!}{%
\begin{tikzpicture}
  \node[draw, circle, minimum size=1cm] (v1) at (0,0) {$\rho_1$};
  \node[draw, circle, minimum size=1cm] (v2) at (2,0) {$|\varphi\rangle$};
  \node[draw, circle, minimum size=1cm] (v3) at (4,0) {$\rho_2$};
  \draw (v1) -- (v2);
  \draw (v2) -- (v3);
\end{tikzpicture}
}
\caption{The input graph. The central node is initialised as $|+\rangle$ or $|-\rangle$ for a bridge operation or as $|0\rangle$ or $|1\rangle$ for a break operation.}
\label{subfig:bridge-break-input}
\end{subfigure}
\hfill
\begin{subfigure}[t]{0.3\textwidth}
\resizebox{\textwidth}{!}{%
\begin{tikzpicture}
  \node[draw, circle, minimum size=1cm] (v1) at (0,0) {$\rho_1$};
  \node[draw, circle, minimum size=1cm] (v3) at (4,0) {$\rho_2$};
  \draw (v1) -- (v3);
\end{tikzpicture}
}
\caption{Output graph in case of bridging.}
\label{subfig:bridge}
\end{subfigure}
\hfill
\begin{subfigure}[t]{0.3\textwidth}
\resizebox{\textwidth}{!}{%
\begin{tikzpicture}
  \node[draw, circle, minimum size=1cm] (v1) at (0,0) {$\rho_1$};
  \node[draw, circle, minimum size=1cm] (v3) at (4,0) {$\rho_2$};
\end{tikzpicture}
}
\caption{Output graph in case of breaking.}
\label{subfig:break}
\end{subfigure}
\caption{Bridge and break states before and after the protocol.}
\label{fig:bridge and break}
\end{figure}

Let us give a detailed description of both protocols in the following diagrams.

\begin{protocol}
\vspace*{1ex}
\caption{Bridge operation.}\label{prot:bridge}
\begin{algorithmic}[0]
\STATE \textbf{Inputs:} The state $CZ_{1,2}CZ_{2,3}(\rho_1|\varphi \ket \rho_2)$ which is equivalent to the Figure \ref{fig:bridge and break}A. Here $|\varphi \ket = \frac{1}{\sqrt{2}}(|0\ket + (-1)^{c}|1\ket)$, with $c \in \{0,1\}$
\STATE \textbf{Outputs:} The state $CZ_{1,2}(\rho_1 \rho_2)$ (as on the Figure \ref{fig:bridge and break}B).
\STATE  \textbf{Protocol:}
\begin{enumerate}
    \item \label{bridge:step1} Measure the second qubit in $|+_{\pi/2} \ket, |-_{\pi/2} \ket$ basis. Record the result in $b$.
    \item \label{bridge:step2} Apply a rotation $Z(-\frac{\pi}{2})$ to the side qubits.
    \item \label{bridge:step3} Apply a correction $Z^{b+c}$ to the side qubits.
\end{enumerate}
\end{algorithmic}
\end{protocol}

\begin{protocol}
\vspace*{1ex}
\caption{Break operation.}
\label{prot:break}
\begin{algorithmic}[0]

\STATE \textbf{Inputs:} The state $CZ_{1,2}CZ_{2,3}(\rho_1 |\varphi \ket \rho_2 )$. Here $|\varphi \ket = |c\ket$, with $c \in \{0,1\}$

\STATE \textbf{Outputs:} Qubits $\rho_1 \rho_2$ disentangled (as on the Figure \ref{fig:bridge and break}C).

\STATE  \textbf{Protocol:}

\begin{enumerate}
    \item \label{break:step1} Measure the second qubit in an arbitrary basis.
    \item \label{break:step2} Apply a correction $Z^{c}$ to both side qubits.
\end{enumerate}
\end{algorithmic}
\end{protocol}

Given the description of bridge and break operations, The idea behind our edge masking technique is to add a node in the middle of each edge that originally belonged only to one of the graphs $G_0$ or $G_1$, called \emph{middle nodes}. Then the server will transform $G_M$ into $G_0$ or $G_1$ using bridge and break operations (Protocol \ref{prot:bridge} and Protocol \ref{prot:break}) in a blind way. Both of these operations consist of a measurement of the middle node and rotations to the side vertices linked to this middle node. Our goal is to incorporate these procedures into a single protocol so that they are indistinguishable for the server.

This is done by first having the client send a qubit in a state chosen at random from $\{|0\ket, |1\ket\}$ if they want to break the edge, or from $\{|+\ket, |-\ket\}$ if they want to bridge it. The server is then instructed in both cases to measure all middle qubits in the basis $|+_{\frac{\pi}{2}}\ket, |-_{\frac{\pi}{2}}\ket$ before any other operation for the computation.

The only other difference between the Protocol \ref{prot:bridge} and Protocol \ref{prot:break} consists of $Z(-\frac{\pi}{2})$ rotations on the side vertices and Z-corrections. We incorporate these operations into the measurement of the corresponding nodes. To hide whether a $(-\frac{\pi}{2})$-rotation is performed, the client must send a qubit in a state $|+_\theta \ket$ with $\theta \sample \{\frac{k\pi}{2}, k = 0, \dots, 3\}$ for both of the side vertices. Later in the computation phase of the protocol we measure this node in the basis $|+_\delta \ket, |-_\delta \ket$ where 
\begin{align*}
    &\delta = \psi' + \theta - \frac{\pi}{2} + (b+c)\pi + r\pi \text{ for the bridging, } \\
    &\delta = \psi' + \theta + c\pi + r\pi \text{ for the breaking }
\end{align*} with $r \sample \{0,1\}$ and following the notations of Protocols \ref{prot:bridge} and \ref{prot:break}. Then $\theta$ and $r$ fully hide the difference between bridging and breaking and the server is incapable of distinguishing the two operation based on the qubits received from the client.

Similarly to the angle hiding, we need to mask the measurement results of these nodes, so we apply our hiding methods to their future cones. The qubit cost of this operation is equal to 1 qubit for the middle vertex plus 2 qubits for the side vertices, plus the number of qubit-masked nodes in their future cones.

\begin{recap}
\vspace*{1ex}
\caption{Masking the shape of the graph.}
\begin{itemize}
    \item The middle vertex is in one of the states: $|0\ket, |1\ket, |+\ket, |-\ket$ depending on the operation.
    \item It is measured in $|+_{\frac{\pi}{2}}\ket, |-_{\frac{\pi}{2}}\ket$ basis.
    \item Side vertices need a qubit to mask possible corrections and generate future cones.
\end{itemize}
\end{recap}

\subsubsection{Deleting a Computational Vertex.}
\label{subsubsec:delete comp}
Here we describe a procedure for blindly deleting a computation vertex $v \in G_M$ that has a preimage under the embedding function in $G_0$ but no preimage in $G_1$. Input and output vertices are discussed respectively in Section \ref{subsubsec:input} and Section \ref{subsubsec:output}. 

If the client wants to execute $U_1$ it deletes $v$ from $G_M$ by sending a state $| b \ket \in \{|0\ket, |1\ket \}$ for this node which disentangles $v$ from the rest of the graph. Every neighboring state it corrected by applying $Z^{b}$, i.e. adding $b\pi$ to their measurement angle. The node $v$ is measured by the server in the basis $\delta = \psi' + \theta$ for a randomly chosen $\theta \sample \{\frac{k\pi}{2} : k = 0 \dots 3\}$.

If the client want to execute $U_0$,  it sends a qubit for vertex $v$ in a states chosen at random from $\{|+_{\frac{k\pi}{2}}\ket : k = 0 \dots 3\}$\footnote{
    Note that in this situation the Client never sends one of 8 angles for this node as it only happens when there is an angle difference between $U_0$ and $U_1$.}
and adapts the measurement angle to $\delta = \psi' + \theta + r\pi$, where $\theta$ is the angle chosen for the state and $r$ is a random bit. 

In both cases, the client will have to hide the measurement outcome of this node and the neighbors so they have to apply the hiding techniques to the future cones of these qubits.

\begin{recap}
\vspace*{1ex}
\caption{Masking a deleted vertex.}
\begin{itemize}
    \item The Client sends a $|0\ket, |1\ket$ state to delete a vertex and $|+_\theta \ket$ with $\theta \in \{\frac{k\pi}{2}, k = 0, \dots, 3 \}$ to keep it.
    \item The node is measured with angle $\delta = \psi' + \theta + r\pi$ and generates a future cone.
    \item Its neighbours are masked with an $r\pi$ rotation and generate future cones.
\end{itemize}
\end{recap}

\subsubsection{Quantum Input Vertices.}
\label{subsubsec:input}

We do not require the embeddings of $G_0$ and $G_1$ to have the same position and number of the input vertices. To make the two computations indistinguishable the client sends qubits for every vertex of $G_M$ such that at least one of its preimages under the embedding functions is an input vertex. Let $v$ be an input node of $G_0$.

If the initial state $\rho_{in}$ for this input vertex is known to the server (public input), its vertex in $G_M$ is mapped to a public input in $G_1$ and they have the same known input state distribution then no masking is required. In other cases if the client whishes to run $U_0$ -- including if the input is private to the client -- the client encrypts the state it sends  $v$ with $Z(\theta)X^a$ where $\theta \sample \{\frac{k\pi}{2}, k = 0, \dots, 3\}$ and $a \sample \{0,1\}$.

If the client wishes to run $U_1$ instead, the behavior depends on the type of the node. If $v$ is also an input of $G_1$ but they do not have the same distribution, then the state sent by the client for vertex $v$ in $U_1$ should be encrypted in the same way. If $v$ is a deleted node in $G_1$, the client applies the procedure from Section \ref{subsubsec:delete comp}.

Otherwise, $v$ is a computational node of $G_1$. If these nodes have a difference in their measurement angles, the client applies the procedure from the Section \ref{sec:angle_masking}.
If not, the client sends a qubit $|+_{\theta} \ket$, with $\theta \sample \{\frac{k\pi}{2}, k = 0, \dots, 3\}$ and adapt the measurement angle accordingly.

\begin{recap}
\caption{Masking input vertices.}
\vspace*{1ex}
\textit{Case 1.} When a public input node collides with another public input node with the same distribution then it is transmitted in clear and does not generate a cone.

\textit{Case 2.} When an input node of $U_0$ is either private, or public and:
\begin{itemize}
    \item[a.] collides with a computational node of $U_1$.
    \item[b.] collides with an input node of $U_1$ with a different distribution.
    \item[c.] collides with an encrypted input.
    \item[d.] collides with a deleted node of $U_1$.
\end{itemize}
\textit{Actions:}
\begin{enumerate}
    \item Send $X^aZ(\theta)(\rho_{in})$ for $U_0$.
    \item For the same order of cases for $U_1$ we send a qubit:
    \begin{itemize}
        \item[a.] Send $|+_\theta \ket$ with $\theta \sample \{\frac{k\pi}{2}, k = 0, \dots, 3\}$.
        \item[b.] Send $X^aZ(\theta)(\rho_{in})$ with $\theta \sample \{\frac{k\pi}{2}, k = 0, \dots, 3\}$.
        \item[c.] Send $X^aZ(\theta)(\rho_{in})$ with $\theta \sample \{\frac{k\pi}{2}, k = 0, \dots, 3\}$.
        \item[d.] Send a $|b\ket$ state with $b \sample \{0,1\}$.
    \end{itemize}
    \item Measure with angle $\delta = (-1)^{a}\psi + \theta + r\pi$.
    \item Mask the future cone generated by the node.
\end{enumerate}
\end{recap}

\subsubsection{Quantum Output Vertices.}
\label{subsubsec:output}

The server should not be able to distinguish the two unitaries by the location or value of the output. If our computation has classical output, then the problem is trivial since the operation of obtaining the output is equivalent to measuring a computational vertex and the future cone masking technique hides the outputs of exactly all vertices that may be influenced by the difference between the two computations. One the other hand, if the output is quantum then the output qubits have to be sent to the client. The set of qubits the server returns to the client should be the same for both computations.

When the locations of the output nodes do not coincide we will teleport the output qubits of the ``smaller graph'' into the output set of $G_M$. Let $v$ be an output vertex of $G_0$ and a computational vertex of $G_1$. We add to $G_M$ two vertices $v_1, v_1$ and the edges $(v, v_1)$ and $(v_1, v_2)$ as shown in Figure \ref{fig:output}. These additional vertices are considered as part of $G_0$. The new default measurement angles are $0$ for $v$ and $v_1$. This means that the operation applied is the identity. To mask this procedure, the client only needs to delete all vertices that do not exist for the computation being executed as described in Section~\ref{subsubsec:delete comp}. The Server prepares the qubit for $v_2$ in the state $|+\ket$. The qubit for $v_2$ is sent back to the client at the end of the computation as it contains the quantum output if the computation is $U_0$.

\begin{figure}[ht]
\caption{Teleporting the output values. Red vertices (including dotted red) belong to $G_1$, yellow vertices belong to $G_0$. The newly created vertices and edges are shown in green and the output of $G_M$ are the nodes in zigzag blue. }
\centering
\includegraphics[scale=0.4]{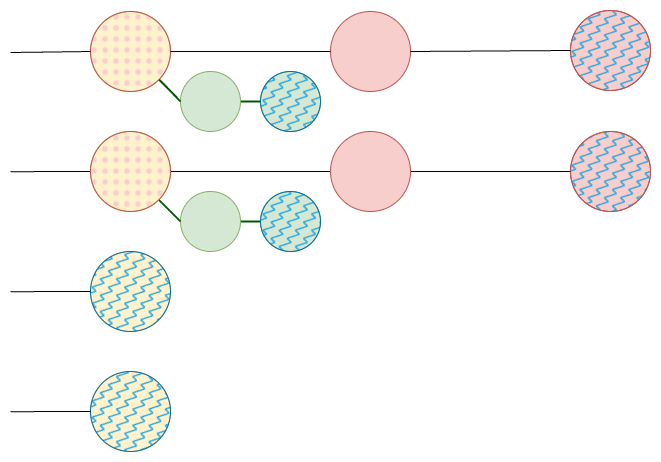}
\label{fig:output}
\end{figure}


\begin{recap}
\vspace*{1ex}
\caption{Masking output vertices.}
Consider a node $v$ that is used as a computational vertex in one of the unitaries and as the output vertex in the other. Let $v_1$ and $v_2$ be the added vertices. The client does the following:
\begin{itemize}
    \item When $v$ is used as a computational vertex:
    \begin{enumerate}
        \item Send $|b\ket$ with $b \sample \{0,1\}$ for $v_1$.
        \item Measure $v$ with angle $(r + b)\pi$ for $r \sample \{0, 1\}$.
        \item Measure $v_1$ with angle $r_1\pi$ for $r_1 \sample \{0, 1\}$.
        \item Mask $v$'s future cone.
    \end{enumerate}
    \item When $v$ used as output:
    \begin{enumerate}
        \item Send a state from $\{|+\ket, |-\ket\}$ for $v_1$.
        \item Measure $v$ with the default angle $r\pi$ for $r \sample \{0, 1\}$.
        \item Measure the $v_1$ with default angle $r_1\pi$, $r_1 \sample \{0,1\}$.
        \item Mask $v$'s future cone.
    \end{enumerate}
\end{itemize}
\end{recap}

\subsubsection{Masking the G-flow.}

The last aspect of the computation that has to be hidden is the g-flow. From the definition of g-flow, each node receives the following corrections:

\[
    \varphi'_i = (-1)^{s_X}\varphi_i + s_{Z}\pi
\]
where
\[
    s_X = \sum_{j: i \in g(j)} s_j\text{,  }
    s_Z = \sum_{j: i \in Odd(g(j)), j \neq i} s_j
\]

We define a X-(Z-)correction set of qubit $i$ to be the set of qubits that may give $i$ an X-(Z-)correction. Here the X-correction set of $i$ is equal to $\{j: i \in g(j), j \neq i\}$ and Z-correction set is $\{j: i \in Odd(g(j))\}$.

The masking of a node is going to depend on the difference in correction sets. On one hand, if both $U_0$ and $U_1$ produce the same correction sets for the node $i$ we apply no hiding.

If only Z-correction sets are different then $\varphi^{0}_i - \varphi^{1}_i = k\pi \pmod{2\pi}$ and classically $r\pi$-masking the node and masking the future cone of this node will suffice.

If there is a difference in X-correction sets, this is equivalent to hiding an X-correction on a node of a future cone from the Section \ref{sec:angle_masking}. Therefore, it can be hidden with an $r\pi$ rotation if its measurement angle is one of $\{\frac{k\pi}{2}, k = 0, \dots, 3\}$, or the client sends an additional qubit for it otherwise. In both cases we mask the future cone of this qubit.

\begin{remark}
Note that these influences are measurement result dependent. Suppose we are working with a node $v$ and every node in the difference between its X-correction sets is already masked with an $r\pi$ rotation. Then the attacker does not know the measurement outcome for these nodes and cannot tell where a possible X-correction of $v$ is coming from. Hence, we do not need to hide the X-correction difference for $v$. The same holds for the Z-correction sets.
\end{remark}

\begin{recap}
\caption{Masking the g-flow difference.}
\vspace*{1ex}
\begin{enumerate}
    \item Compute the correction sets of the node.
    \item If there is a difference in Z-correction sets mask the qubit with an $r\pi$ rotation.
    \item Else if there is a difference in X-correction sets mask the qubit with an $r\pi$ rotation when the angle is from $\{\frac{k\pi}{2}, k = 0, \dots, 3\}$ or using an additional qubit when it is from $\{\frac{(2k+1)\pi}{4}, k = 0, \dots, 3\}$ otherwise.
    \item In both cases mask the future cone of the qubit.
\end{enumerate}
\end{recap}

\subsection{Full Masking Protocol}\label{sec:full_protocol}

After describing in the previous two sections the various cases which require some hiding technique to be applied, we combine these cases here and characterize more generally when future cone masking is necessary. This will facilitate the description of our full protocol later.

We define influence sets of every node with respect to the operations induced by the masking techniques or by the MBQC corrections. These operations include Z-corrections, X-corrections and $\frac{\pi}{2}$-rotations.

\begin{definition}[X-influence set]
    The X-influence set of a node $v$ in graph $G$ contains all nodes of $G$ that require applying an $X$ gate to $v$ at least in one of the branches of the computation to keep the result deterministic and correct.
\end{definition}

\begin{definition}[Z-influence set]
    The Z-influence set of a node $v$ in graph $G$ contains all nodes of $G$ that require applying a $Z$ gate to $v$ at least in one of the branches of the computation to keep the result deterministic and correct.
\end{definition}

\begin{definition}[R-influence set]
    The R-influence set of a node $v$ in graph $G$ contains all the nodes of $G$ that require applying a rotation gate $Z(-\frac{\pi}{2})$ to $v$ to keep the result correct.
\end{definition}

We can now define the set of vertices in $G_M$ that generate future cones. This set includes every vertex for which an influence set in $G_0$ and in $G_1$ is different. The other two types of vertices that generate a cone are vertices having a difference in the default measurement angle in $U_0$ and $U_1$ and the input vertices that need to be encrypted.

\begin{remark}
    Since the g-flow in $G_0$ and $G_1$ can differ, the future cones of a node are going to be different in the two cases. Therefore, we need to hide the shape of the future cone of a vertex in both graphs. When we say a future cone of a vertex in this part of the document we mean the union of the future cones of this node in both graphs.
\end{remark}


\noindent We formally define the full masking procedure in Protocol \ref{prot:full masking pp} and Protocol \ref{prot:full masking cp}.

\begin{protocol}[htp]
\caption{Full Masking. Preparation Phase.}
\vspace*{1ex}
\label{prot:full masking pp}
\begin{algorithmic}[0]

    \STATE \textbf{Inputs:} A graph $G_M$ compatible with masking techniques described above. We are assuming that the output teleportation circuits are already a part of $G_M$.
    \STATE  \textbf{Protocol: (steps \ref{pp:step1}-\ref{pp:step4} are executed by the Client)}

    \begin{enumerate}
        \item \label{pp:step1} Compute the input maskings depending on the input distribution as discussed in section \ref{subsubsec:input}.
        \item \label{pp:step2} For every node in $G_M$, compute the influence sets of this node. To do this we need to consider influences of the g-flow in both graphs, the locations of bridge and break operations, deleted vertices and output teleportation circuits.
        \item \label{pp:step3} Compute the list of nodes that generate future cones. This list should include every vertex that has different influence sets in $G_0$ and $G_1$, the vertices that have an angle difference in two graphs and encrypted input vertices.
        \item \label{pp:step4} For every node $v$ send the following qubits to the Server:
        \begin{enumerate}
            \item \label{pp:step4a} $\rho_{in}$ if $v$ is a clear text input.
            \item \label{pp:step4b} $X^{a_v}Z(\theta_v)(\rho_{in})$ with $\theta_v \sample \{\frac{k\pi}{2}, k= 0, \dots, 3\}$ and $a_v \sample \{0,1\}$ if $v$ is an encrypted input.
            \item \label{pp:step4c} $|0\ket /|1\ket$ if $v$ is a deleted node or middle break vertex.
            \item \label{pp:step4d} $|+\ket / |-\ket$ if $v$ is a middle bridge vertex or a middle teleportation qubit when output is being teleported.
            \item \label{pp:step4e} $| +_{\theta_v} \ket$ where $\theta_v$ is defined by $\theta_v \sample \{\frac{k\pi}{2}, k= 0, \dots, 3\}$ if $v$ is a computational vertex of the current graph that would be deleted or would be an input in the other graph. Also if $v$ has an X-influence difference and has a qubit-masked measurement angle or it is a qubit-masked node that belongs to a future cone.
            \item \label{pp:step4f} $| +_{\theta_v} \ket$ where $\theta_v$ is defined by $\theta_v \sample \{\frac{k\pi}{4}, k= 0, \dots, 7\}$ if $v$ has different default measurement angles in $U_0$ and $U_1$.
        \end{enumerate}
        \item \label{pp:step5} The Server prepares other nodes of the graph in a $|+\ket$ state.
        \item \label{pp:step6} The Server entangles the qubits received from the Client and the $|+\ket$ states with respect to the resource graph $G_M$.
    \end{enumerate}
\end{algorithmic}
\end{protocol}

\begin{protocol}[htp]
\caption{Full Masking. Computation Phase.}
\label{prot:full masking cp}
\begin{algorithmic}[0]

    \STATE \begin{enumerate}
        \item \label{cp:step1} The Server measures every middle bridge/break vertex with angle $\frac{\pi}{2}$.
        \item \label{cp:step1a} The Server measures every middle teleportation node with angle $r\pi$, with $r \sample \{0,1\}$.
        \item \label{cp:step2} For every remaining non-output node for $G_M$, let us call it $v$, repeat the following:
        \begin{enumerate}
            \item \label{cp:step2a} The Client computes the corrected measurement angle $\varphi'$ depending on the default angle and the real measurement outcomes of previous vertices.
            \item \label{cp:step2b} The Client instructs the Server to measure $v$ with angle $\delta$ where:
            \begin{enumerate}
                    \item \label{cp:step2b2} $\delta = \varphi' + \theta_v +r\pi$ if $v$ is an output and computational node at the same time, or has a default angle difference, or has R-influence set difference, or it is in a future cone, or it has X-influence difference and is qubit-masked; or if $v$ is a node that is deleted in one of the graphs; or if $v$ is an input and computational node at the same time; or if $v$ is an encrypted input.
                    \item \label{cp:step2b3} $\delta  = \varphi'+ r\pi$ if $v$ is a part of a future cone or has an X- or Z- influence set difference and is not qubit-masked.
                    \item \label{cp:step2b4} $\delta = \varphi'$ if there is no masking on $v$; or if there is $v$ is a public input vertex.
                \end{enumerate}
            In case of ambiguity the Client should apply the first rule from the above list that fits. The parameter $r$ is sampled uniformly at random from $\{0,1\}$.
            \item \label{cp:step2c} The Server returns the measurement outcome to the Client.
    \end{enumerate}
        \item \label{cp:step3} When every non-output qubit has been measured the Server sends the output qubits to the Client.
        \item \label{cp:step4} The Client calculates the output X- and Z-corrections that depend on the masking and the measurement outcomes and applies them to the output qubits.
    \end{enumerate}
\end{algorithmic}
\end{protocol}

\subsection{Proof of Blindness in AC Framework.}

In the following section we prove the blindness of this protocol. We start by giving a high-level definition of the Ideal Resource for the 1-of-2 DQC. Then we formulate the Security Theorem \ref{thm:blind}. To prove it we modify the original Protocols \ref{prot:full masking pp} and \ref{prot:full masking cp} in a series of reductions that are statistically equivalent. Lastly, we complete the definition describing the 1-of-2 DQC Resource and we complete the proof by designing the Simulator that makes the real world protocol statistically indistinguishable from the ideal world simulation. This implies the blindness of the scheme in the AC framework.

\subsubsection{1-of-2 Delegated Quantum Computation Resource Definition.}

Our protocol implements the 1-of-2 Delegated Quantum Computation Resource ($\mathcal{R}$) for an honest Client, a possibly malicious Server. The Server has two filtered interfaces that are controlled by bits $(c,d)$. We define the 1-of-2 DQC Resource for two fixed unitary transformations $U_0$ and $U_1$ that are embedded in $\mathcal{R}$.

\begin{resource}
\caption{1-of-2 Delegated Quantum Computation Resource.}
\label{res:ubqc short}
\begin{algorithmic}
\STATE \begin{enumerate}
    \item Client sends $i \in \{0,1\}$ and $\rho_{C}$ to the $\mathcal{R}$. The bit $i$ is a control bit to tell $\mathcal{R}$ which unitary the Client wants to execute and $\rho_{C}$ is an n-qubit long input to the requested unitary.
    \item If $d=1$ $\mathcal{R}$ sends the leakage  $l^{U_0,U_1}$ and $l^{\rho_{C}}$ to the Server.
    \item Case 1 ($c=0$):
    \begin{enumerate}
        \item $\mathcal{R}$ responds with $U_i(\rho_{C})$ to the Client.
    \end{enumerate}
    Case 2 ($c=1$):
    \begin{enumerate}
        \item The Server sends $(\mathcal{E}, \rho_{S})$ to $\mathcal{R}$.
        \item $\mathcal{R}$ sends $\mathcal{E}(i, \rho_{C}, \rho_{S})$ to the Client.
    \end{enumerate}
\end{enumerate}
\end{algorithmic}
\end{resource}

In our case the leakage $l^{U_0,U_1}$ is defined to be the full description of $U_0$ and $U_1$ in terms of the default angles, resource graphs and g-flows. It also includes the length of the input and the output for both unitaries. The other part of the leakage: $l^{\rho_{C}}$ is equal to the state of the input qubits that are public. The exact format of the deviation $(\mathcal{E}, \rho_{S})$ is described in section \ref{sec:proof}. \\

\begin{figure}[ht]
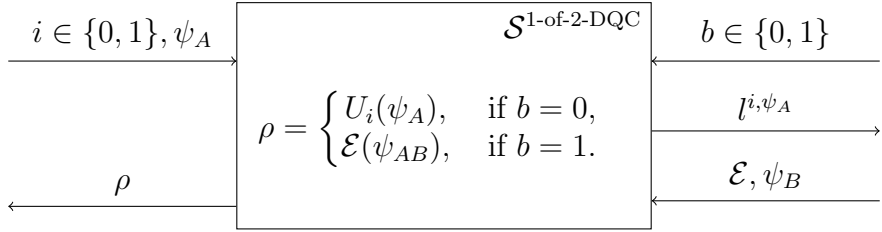

\centering
    \begin{bbrenv}{resource}
    \begin{bbrbox}[name=$\mathcal{S}^\text{1-of-2-DQC}$, minheight=3cm]
        $\rho = \left\{ \begin{matrix}
             U_i(\psi_A), & \text{ if } b=0, \\
            \mathcal{E}(\psi_{AB}), & \text{ if } b=1.
        \end{matrix} \right.$
    \end{bbrbox}
    \bbrmsgspace{6mm}
    \bbrmsgto{top={$i\in\bin, \psi_A$}, length=3cm}
    \bbrmsgspace{12mm}
    \bbrmsgfrom{top={$\rho$}, length=3cm}
    \bbrqryspace{6mm}
    \bbrqryfrom{top={$b\in\bin$}, length=3cm}
    \bbrqryspace{2mm}
    \bbrqryto{top={$l^{i,\psi_A}$}, length=3cm}
    \bbrqryspace{2mm}
    \bbrqryfrom{top={$\mathcal{E},\psi_B$}, length=3cm}
    \end{bbrenv}
\caption{1-of-2 DQC Resource. On the Client's interface, the Resource receives the input to the computation and sends the results. The Server's interface is inactive unless the Server decides to deviate maliciously from the protocol. If the latter is the case, the Server receives the leakage $l^{i,\psi_A}$ from the resource and submits its deviation $(\mathcal{E}, \psi_B)$.}
\label{fig:if}
\end{figure}

\begin{remark}
If the protocol constructs the 1-of-2 DQC resource it provides perfect blindness for the Clients input against an unbounded adversary.
\end{remark}

\subsubsection{Proof of Blindness.}
\label{sec:proof}

\begin{theorem}
\label{thm:blind}
    The Protocols \ref{prot:full masking pp} and \ref{prot:full masking cp} perfectly construct the 1-of-2 DQC Resource \ref{res:ubqc short} from a quantum channel.
\end{theorem}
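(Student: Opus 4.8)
**

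The goal is to prove that Protocols~\ref{prot:full masking pp} and~\ref{prot:full masking cp} perfectly construct the 1-of-2 DQC Resource~\ref{res:ubqc short} in the Abstract Cryptography framework. Per the AC construction definition, this requires establishing two things: \emph{correctness} (the honest execution $\pi_1 \mathcal{R} \pi_2$ reproduces the ideal resource when the server is honest, i.e. $b=0$) and \emph{security against a malicious server} (there exists a simulator $\sigma$ such that the real protocol $\pi_1 \mathcal{R}$ is statistically indistinguishable from $\mathcal{S}^{\text{1-of-2-DQC}} \sigma$). Since the claim is \emph{perfect} construction, both must hold with $\epsilon = 0$.

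\textbf{Plan of attack.} The plan is to proceed in three stages, mirroring the outline already sketched in Section~\ref{sec:proof}. First, I would establish correctness: assuming an honest server, I would verify that the combined angle-masking and graph-masking operations reproduce exactly the MBQC computation $U_i(\rho_C)$. The key observation is that every masking operation introduced in Sections~\ref{sec:angle_masking} and~\ref{sec:graph_masking} is \emph{invertible by the client}: each random pad $\theta_v$ added to a measurement angle is subtracted back by the client when computing $\varphi'$, each bit flip $r$ on a measurement outcome is undone, and the bridge/break operations faithfully realise the intended graph $G_i$ by Protocols~\ref{prot:bridge} and~\ref{prot:break}. I would argue that after the client applies the final X- and Z-corrections in step~\ref{cp:step4}, the output register carries precisely $U_i(\rho_C)$, so correctness is perfect.

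\textbf{The security reduction.} The heart of the proof is the malicious-server case. Following the stated strategy, I would first transform the real protocol through a sequence of \emph{statistically equivalent} rewritings that do not change the distribution seen by any distinguisher. The crucial such rewriting is the standard UBQC argument: for every masked node, the one-time-pad structure $\delta = \varphi' + \theta + r\pi$ makes $\delta$ uniformly distributed and \emph{independent} of the underlying angle $\varphi'$ and the true outcome, provided $\theta$ ranges over the correct group ($\{k\pi/2\}$ for qubit-masked nodes, $\{k\pi/4\}$ for angle-difference nodes) and $r$ hides the outcome bit. The central claim to verify is that for \emph{every} node generating a future cone, the union of the influence-set differences between $G_0$ and $G_1$ is fully covered by a mask, so that no residual correlation between $\delta$-values leaks the choice bit $i$ — this is exactly what rules out the differential attack of Lemma~\ref{lemma:no-go-correlated-angles}.

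\textbf{Constructing the simulator.} Finally, I would build the simulator $\sigma$ that, having access only to the leakage $l^{i,\psi_A}$ (the public descriptions of $U_0, U_1$, the graph $G_M$, and public input states), produces a transcript identically distributed to the real one. The simulator samples all $\theta_v$ and $r$ values itself, prepares the qubits that the client would send (replacing encrypted/private states by maximally mixed states of the correct dimension, which is legitimate since the one-time pad makes them identical), and forwards the malicious server's deviation $(\mathcal{E}, \psi_B)$ to the ideal resource. I expect the \textbf{main obstacle} to be arguing that the \emph{qubits sent during graph masking} — the middle bridge/break vertices in states $\{|0\rangle,|1\rangle\}$ versus $\{|+\rangle,|-\rangle\}$ — are perfectly indistinguishable: one must show that after the server's prescribed $|+_{\pi/2}\rangle$-measurement of every middle node, the post-measurement reduced state on the remaining qubits is identical regardless of whether a bridge or a break was intended, so that the choice between $U_0$ and $U_1$ is information-theoretically hidden even though the sent states are non-orthogonal only within each pair. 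Verifying this mixing, together with checking that the teleportation gadget for output vertices yields the same return register for both computations, is the technically delicate core; once it holds, perfect indistinguishability ($\epsilon=0$) follows because every component distribution matches exactly.
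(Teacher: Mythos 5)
Your high-level structure (correctness, a chain of perfectly indistinguishable rewritings, then a simulator) matches the paper's, and your one-time-pad arguments for why each $\delta$ is uniform are the right ingredients. However, there is a genuine gap in your simulator construction, and it is the step the paper spends most of its effort on.

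You propose that the simulator ``samples all $\theta_v$ and $r$ values itself'' and ``replac[es] encrypted/private states by maximally mixed states of the correct dimension.'' This makes the server-side transcript distributionally correct, but in the AC framework the distinguisher also controls the client interface and sees the client's output. Under a malicious server deviation $\mathcal{E}$, the ideal resource must return $\mathcal{E}(i,\rho_C,\rho_S)$ — an output that is jointly consistent with the server's transcript, the pads the simulator already committed to, and the client's actual input $(i,\rho_C)$, which the simulator never sees. If the simulator fixes the pads and sends unpurified mixed states, there is no mechanism left to make the qubits the server already holds (and has acted on) consistent with the client's input after the fact. The paper closes this gap with the EPR/delayed-measurement technique: Reduction~\ref{red:epr-pairs} replaces each sent qubit by half of an EPR pair whose purification the client retains; Reduction~\ref{red:uni angles} uses the fact that the server's measurement $(M(\delta)\otimes \mathrm{Id})$ commutes with the client's later operations on its half, so the angle $\delta$ can be sampled uniformly \emph{first} and the pad $\theta_v=\delta-\varphi'$ deduced afterwards. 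The simulator then only sends EPR halves and uniform angles, and hands the retained halves plus the transcript to the Resource~\ref{res:ubqc comp}, which — knowing $i$ and $\rho_C$ — performs the basis-dependent measurements (computational vs.\ Hadamard for break vs.\ bridge, $|+_{-\theta_v}\rangle$ basis for masked nodes) to steer the server's qubits retroactively and reconstruct the correct outcomes and final corrections. Your observation that the bridge/break states must be indistinguishable after the $\pi/2$ measurement is correct, but in the paper this too is discharged by the same no-communication/steering argument rather than by a separate post-measurement state computation. Without retaining purifications and deferring the state-fixing to the resource, your simulator cannot reproduce the joint distribution of transcript and client output, so the proof as proposed would not go through.
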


\begin{proof}
In what follows we are going to describe a sequence of reductions and mention in every line which steps of the Protocols \ref{prot:full masking pp} and \ref{prot:full masking cp} are being replaced. All lines that are not mentioned stay the same.

In Reduction \ref{red:epr-pairs} we replace the masked qubits which the Client prepares for the Server with EPR-pairs. One half of every pair the Client sends to the Server and the other half they keep as their state.

This approach allows us to delay the choice of secret parameters $\theta$ and $r$ until later in the protocol. When the Server is requesting a measurement angle for one of those qubits the Client teleports there a correct value depending on the type of the qubit.

There are a few different cases to consider. Firstly, if the qubit is a bridge/break middle vertex the Client either teleports a $|0\ket/|1\ket$ state or $|+\ket/|-\ket$ to this node. This kind of qubits will be measured with angle $\frac{\pi}{2}$. If the qubit has to be deleted the Client teleports a $|0\ket/|1\ket$  state there as well. For these deleted qubits the Client also samples uniformly at random the parameters $\theta_v$ and $r$ and asks the Server to measure it with angle $\delta = \varphi' + \theta_v + r\pi$. For the rest of the qubits the Client samples $\theta_v$ from the set associated to this node defined in Section \ref{sec:full_protocol} for this node and teleports the value $|+_{\theta_v+ r\pi}\ket$ to the qubit. Then this qubit is measured with an angle $\delta = \varphi' + \theta_v$. It is equivalent to measuring $|+_{\theta_v}\ket$ with an angle $\delta = \varphi' + \theta_v + r\pi$ as in the original protocol.

It is easy to see that if the new parameters follow the uniform distributions as in the original protocol other steps of the computation are not affected by our modification. If for $\theta_v$ it is evident that the distribution is uniform we have to argue why $r$ is also uniformly random when it is recovered as a measurement outcome. 

The Server can manipulate their part of the EPR-pair which introduces changes to the Client's state. However Client's actions on their half of the pair do not depend on the messages sent earlier in the Protocol. It means that the situation is equivalent to not communicating at all. Hence, we can apply the No-Communication Theorem to justify why local operations on an entangled state do not transmit information. By no-communication theorem, all the measurement results that the Client obtains have the same distribution as if they shared a $\frac{1}{\sqrt{2}}(|+_{-\theta_v}\ket |+_{\theta_v}\ket + |-_{-\theta_v}\ket |-_{\theta_v}\ket)$ state with the Server. Therefore, when the Client measures their state in $|+_{-\theta_v}\ket, |-_{-\theta_v}\ket$ basis they are equally likely to obtain $0$ or $1$.

A similar argument holds to prove that the EPR half sent by the Client is indistinguishable from a correctly prepared $|+_{\theta_v}\ket$ state, $|0\ket/|1\ket$ state or a $|+\ket/|-\ket$ state. Consider the following two situations. In the first one the Client simply sends the EPR half to the Server, and in the second one they measure their qubit to teleport a correct value to the Server's half and send it to the Server. Due to No-Communication Theorem the Server cannot distinguish between these two cases, so all the measurements the Server can do on their EPR half will give the same results as when the qubit was prepared correctly.

In conclusion we have the following:
\begin{itemize}
    \item The parameters $\theta_v$, $r$ and $\delta$ are uniformly random from the corresponding sets. Hence, indistinguishable from original parameters.
    \item The operation order is different only on the side of the Client.
    \item The qubits that the Client sends to the Server are indistinguishable from correctly prepared qubits.
    \item After the modified steps the computation is exactly the same as in the original protocol for the secret parameters that we calculated.
\end{itemize}
    That is why the protocol after the modification is perfectly equivalent to the original protocol from the point of view of the Distinguisher.

\begin{reduction}
\caption{Using EPR pairs.}
\label{red:epr-pairs}
\begin{algorithmic}[0]
\STATE \begin{enumerate}
    \item (Lines \ref{pp:step4b} - \ref{pp:step4f} of PP.)
    \begin{enumerate}
        \item Public input qubits are sent to the Server.
        \item For every other qubit that should be sent to the Server, the Client prepares an EPR-pair and sends one half to the Server.
    \end{enumerate}
    \item (Line \ref{cp:step1} of CP.)
    Break operation: the Client measures their half of the EPR-pair in the computational basis. If the outcome equals $b$ - the Server has $|b\ket$ state.
    Bridge operation: they need to apply $H$ to their half of the EPR pair and measure it in the computational basis. If the measurement outcome is equal to $r$, the Server holds $\frac{1}{\sqrt{2}}(|0\ket + (-1)^{r}|1\ket)$ state.
    \item (Lines \ref{cp:step2b2} of CP.)
    \begin{enumerate}
        \item The Client samples $\theta_v$ from the same set as in the original protocol.
        \item (for deleted qubits) The Client measures their half of the EPR-pair in the computational basis. If the outcome equals b - the Server has $|b\ket$ state. The Client sends the measurement angle for $v$: $\delta = \varphi' + \theta_v + r\pi$, where $r \sample \{0,1\}$.
        \item \label{red1:step3c}(for non-deleted qubits) The Client measures the EPR pair that corresponds to $v$ in basis $|+_{-\theta_v}\ket, |-_{-\theta_v}\ket$ with outcome $r$. Due to the teleportation principle the other half is now in the state $|+_{\theta_v + r\pi}\ket$. The Client sends the measurement angle for $v$: $\delta = \varphi' + \theta_v$.
    \end{enumerate}
\end{enumerate}
\end{algorithmic}
\end{reduction}

Let us now discuss Reduction \ref{red:uni angles}. Here we let the Client first sample a random measurement angle $\delta$ and then calculate which secret parameters it would correspond to. This way, when we substitute the Client with the Simulator, the actions of the Simulator will not require access to Client's input or their secret keys.

To start, we note that the angles $\delta$ follow the same distribution as in the previous protocol. Let us now prove that we can change the order of the operations.

Suppose we are treating a non-deleted node. The angle $\theta_v = \delta - \varphi' - r\pi$ is calculated using the default angle, the angle $\delta$ that we just sent to the Server and the measurement outcomes we have already received from the Server. Hence, $\theta_v$ can be deduced up to a $\pi$ rotation.

The operations we perform to our half of the EPR pair teleport a $|+_{\theta_v + r\pi}\ket$ state to the Server's half. This qubit has already been measured, but the measurement is an operator of the form $(M(\delta) \otimes Id)(\rho_{1,2})$, where $\rho_{1,2}$ is the current state of the EPR pair. Therefore, it commutes with any operation we apply on the second qubit including $Id \otimes H Z(\theta_v)$. It means that our actions in the Reduction \ref{red:uni angles} are equivalent to first teleporting the correct value into the qubit and then performing the measurement (as we do in Reduction \ref{red:epr-pairs}). The same proof applies to the deleted nodes and the bridge/break nodes.

We also make modifications of another kind. For the classically masked angles in lines \ref{cp:step2b3} and \ref{cp:step2c} of CP we replace $\delta = \varphi' + r\pi$ with $\delta = \varphi + r\pi$. The angle $\varphi$ is no longer updated with respect to the $X$ and $Z$ corrections. We know that if this angle is masked classically then $(\varphi'- \varphi) \in \{0, \pi\} \pmod{2\pi}$. So by adding $r\pi$ with a uniformly random $r$ in both cases we make $\delta$ have the same distribution before and after the modification. An updated procedure for dealing with the measurement outcome described in step \ref{red2:step3c} of Reduction \ref{red:uni angles} allows us to recover the correct result of the measurement. In conclusion, the two reductions are perfectly indistinguishable.

\begin{reduction}[htp]
\caption{Uniform measuring angles.}
\label{red:uni angles}
\begin{algorithmic}[0]
\STATE \begin{enumerate}
    \item (Line \ref{cp:step1} of CP.)
    \begin{enumerate}
        \item The Server measures all bridge/break vertices with angle $\frac{\pi}{2}$
        \item Break operation: the Client measures their half of the EPR-pair in the computational basis. If the outcome equals $b$ - the Server was holding $|b\ket$ state.
            Bridge operation: they apply $H$ to their half of the EPR pair and measure it in the computational basis. If the measurement outcome is equal to $r$, the Server was holding $\frac{1}{\sqrt{2}}(|0\ket + (-1)^{r}|1\ket)$ state.
    \end{enumerate}

    \item (Line \ref{cp:step2b2} of CP.)
    \begin{enumerate}
        \item The Client samples $\delta$ from the set of possible angles for $v$ and sends $\delta$ as a measurement angle for $v$.
        \item The Client computes $\theta_v = \delta - \varphi'$.
        \item (for deleted qubits)  They measure their half of the EPR-pair in the computational basis. If the outcome equals b - the Server had $|b\ket$ state. The measurement result does not depend on the Z-rotations so we can set $r$ equal a random bit.
        \item (for non-deleted qubits) They measure the qubit in $|+_{-\theta_v}\ket, |-_{-\theta_v}\ket$ with the outcome recorded into the variable $r$.
    \end{enumerate}
    \item (Lines \ref{cp:step2b3} and \ref{cp:step2c} of CP.)
    \begin{enumerate}
        \item The Client sends $\delta = \varphi + r\pi$ to the Server with $r \sample \{0,1\}$. Note that now we are using the default angle $\varphi$.
        \item The Server returns a measurement outcome $s$.
        \item \label{red2:step3c} The Client computes $\varphi'$ and if $\varphi + r\pi = \varphi' + \pi$ the Client flips $s$ and keeps it for the future corrections; otherwise they keep $s$ as it was.
    \end{enumerate}
\end{enumerate}
\end{algorithmic}
\end{reduction}

As the last part of the proof we construct the Simulator that makes the Real world and the Ideal world protocol executions perfectly indistinguishable. Intuitively, in the ideal world we push the measurements of the Client's EPR halves to the very end of the protocol and delegate them to the Ideal Resource that knows Client's inputs.

With the information from the Simulator and the EPR halves the Ideal Resource reconstructs the output that the Client would have obtained if they ran the UBQC protocol with the Adversary in the real world. The set of Client's secret keys for this execution is decided by the Ideal Resource.

Here we give the definition of the Simulator \ref{sim: ubqc} , the description of the deviation provided to the Resource and which computation the Resource performs to reconstruct the output.

\begin{simulator}[htp]
\caption{The UBQC protocol simulator.}
\label{sim: ubqc}
\begin{algorithmic}[0]
\STATE \begin{enumerate}
    \item For every public input qubit the Simulator transmits it to the Server in clear from the leakage.
    \item For every other qubit that has to be sent to the Server the Simulator prepares an EPR pair and sends one half to the Server.
    \item For every masked qubit $v$ the Simulator does the following:
    \begin{enumerate}
        \item If $v$ belongs to the case \ref{cp:step2b2} of CP the Simulator samples $\delta$ from a set of possible measurement angles for $v$ and sends $\delta$ as a measurement angle for $v$.
        \item Else if $v$ belongs to case \ref{cp:step2b3} of CP Simulator sends $\delta = \varphi + r\pi$ to the Server with $r \sample \{0,1\}$.
        \item Else if $v$ belongs to case \ref{cp:step2b4} of CP Simulator sends $\delta = \varphi'$ corresponding to the published measurement outcomes.
        \item The Simulator receives $s \in \{0,1\}$ from the Server.
    \end{enumerate}
    \item The Simulator receives the output qubits from the Server.
    \item In the end the Simulator sends to the 1-of-2 DQC Resource the following:
    the generated half EPR pairs, the output qubits, all measurement outcomes $s$, angles $\delta$ for the masked qubits, bits $r$ from the case \ref{cp:step2b3}.
\end{enumerate}
\end{algorithmic}
\end{simulator}

\begin{resource}[htp]
\caption{1-of-2 Delegated Quantum Computation Resource. Full description.}
\label{res:ubqc comp}
\begin{algorithmic}[0]
\STATE \begin{enumerate}
    \item The Resource receives from the Client the input qubits and the bit $i$ that indicates which unitary they would like to perform.
    \item From the Server's interface the Resource receives the information describing the deviation in the following form: a qubit for every masked qubit sent to the Server in the preparation phase, the output qubits, all measurement outcomes $s$, angles $\delta$ for the masked qubits, bits $r$ from the case \ref{cp:step2b3}.




    \item For every bridge/break qubit:
    \begin{enumerate}
        \item Break operation: the Resource measures the corresponding
        in the computational basis and record the outcome $b$.
        \item Bridge operation: the Resource applies $H$ to the corresponding qubit and measures it in the computational basis and records the measurement outcome in $r$.
    \end{enumerate}
    \item Then for every masked qubit the Resource does the following:
    \begin{enumerate}
        \item The Resource computes the correct $\varphi'$ from the measurement outcomes received on the Server's interface and the correct measurement outcome the Resource recomputed.
        \item The Resource computes $\theta_v = \delta - \varphi'$.
        \item \label{res:step1} (for deleted nodes) The Resource measures the corresponding in the computational basis and records the outcome $b$. The resource also samples $r$ at random. If $r = 1$ the Resource flips the measurement outcome $s$.
        \item \label{res:step2} (for non-deleted nodes) The Resource measures the corresponding qubit in $|+_{-\theta_v}\ket, |-_{-\theta_v}\ket$ basis with outcome $r$. If $r = 1$ the Resource flips the measurement outcome $s$.
        \item \label{res:step3} (for classical masking nodes) If $\delta = \varphi' + \pi$ the Resource flips the measurement outcome of this node.
    \end{enumerate}
    \item The Resource now computes the final correction, applies it to the output qubits and outputs these qubits at the Client's interface.
\end{enumerate}
\end{algorithmic}
\end{resource}

It is easy to see that the messages the Simulator sends to the Adversary (run by a Distinguisher) have the same distributions as the messages after the Reduction \ref{red:uni angles}.

In the description of the Resource \ref{res:ubqc comp} the reader can see how the information which the Simulator obtains from the Adversary can be used to reconstruct real UBQC computation results including the deviation of the Adversary.

To see the correctness of this computation we argue that combining the Simulator and the Resource into one party that executes these two protocols sequentially is equivalent to the original protocol after the Reduction \ref{red:uni angles}.

When the Resource reconstructs the computation its main objective is to recover the correct measurement outcomes one by one to know how to update the following angles and the following outcomes. Similarly to the Reduction \ref{red:uni angles} when treating a particular node the Resource has already computed a correct measurement outcome for every node in its past so now the Resource can determine $\varphi'$ for the current node and follow the instructions in steps  \ref{res:step1} - \ref{res:step3} to compute the correct value of $s$.

We proved that every modification of the protocols was perfectly indistinguishable. Therefore, the Protocol \ref{prot:full masking pp} and the Protocol \ref{prot:full masking cp} $\epsilon$-construct the Ideal Resource defined in Resource \ref{res:ubqc short} and Resource \ref{res:ubqc comp} for $\epsilon =0$. This equivalence implies perfect blindness for the Clients input $i$ and the private input qubits.
\end{proof}

\section {Conclusion and Future Directions}

In this work, we have explored the fundamental question of whether it is possible to reduce the quantum communication complexity in universal blind quantum computing. Through a sequence of impossibility results, we showed that no quantum process, whether exact, approximate, separable or entangled, performed solely on the server side can distribute or expand encrypted resource states without violating blindness. These results go beyond the no-cloning theorem, revealing that even approximate distributor machines introduce dependencies exploitable via differential attacks, thereby breaking security. The linearity of quantum mechanics, while being the root of both no-cloning and our no-distribution theorem, is not sufficient to reduce our result to the former, as the randomness distribution task is more general. 

As a constructive result, we introduced a new paradigm we call selectively blind quantum computing (SBQC), which considers the case where a client wishes to delegate one computation from a known set, without revealing which one has been chosen. We formalized this in the language of composable cryptography and presented protocols that reduce the required number of quantum communications depending on the differences between the computation graphs required to implement various computations in the target set. The core idea is that only certain parts of the computation need to be hidden. We presented masking techniques for both angle differences and structural differences using a merged graph formalism.

A key insight from our protocol is the direct relationship between the placement of non-Clifford gates and the cost of hiding. In particular, non-Clifford gates are the ones that depending on their location within the circuit they might require masking via additional qubits sent from the client. If a circuit is compiled in such a way that these non Clifford gates are located in non-overlapping regions of the propagation cones of client's secret parameter then less masking is necessary leading to reduced quantum communication. This provides a new perspective on circuit compilation, distinct from existing optimization strategies for fault-tolerant computing (e.g., minimizing T-count or T-depth). Our approach introduces a new objective: structuring circuits for optimal hiding, opening a new direction in quantum compilation informed by security rather than error correction.

 Although we focused exclusively on Blind quantum computing in this work, it would be important to explore whether these hiding techniques can be adapted to verifiable blind quantum computation protocols. Whether the same trade-offs apply, or new impossibility results arise in that context, remains an open problem. A crucial component of the proof of verifiability is Pauli twirling that cannot be applied to our protocol if some nodes do not have any masking. But in the private input scenario of our protocol, every qubit-masked node is masked by encrypting the measurement angle, and every other node is masked with an $r\pi$ rotation. One future direction would be to investigate whether this masking is sufficient for the Pauli twirling and making our protocol verifiable.

It is also natural to ask whether the notion of SBQC can scale efficiently beyond two computations, and what limitations emerge when attempting to balance leakage with communication cost. For example it is not clear what is the best strategy for merging two given graphs in an optimised way. Having this algorithm would allow us to estimate the improvement that our protocol gives in terms of the size of the entangled state of the Server side. More precisely, we expect a trade-off between the size of the resource graph and the qubit communication complexity. Although this problem seems difficult to solve for arbitrary graphs, it would already be very useful to find an optimal merger for $\mathbb{Z}_2$ lattices of different size or other regular graphs used in practice. Altogether, our work both constrains and extends the design space for blind quantum computation and introduces new tools to navigate it.

\section{Acknowledgments}
MD, DL, and EK acknowledge support from the Quantum Advantage Pathfinder (QAP) research program within the UK’s National Quantum Computing Center (NQCC). MD and EK also acknowledges the Integrated Quantum Networks Hub, grant reference EP/Z533208/1. AP and MD would like to thank Robert Booth for useful discussions in the early stages of the project. O.L would like to thank the members of the crypto group in LIP6 for helpful discussions.  

\bibliography{one-of-two_ubqc}
\bibliographystyle{ieeetr}

\clearpage
\appendix

\section{Alternative Proof of Lemma~\ref{lemma:no-sep-gadget-d}}
\label{sec:alternative-proof-lemma-1}

Here, we present an alternative proof of Equation~\eqref{eq:d-gadget} to highlight the connection between the lemma and the linearity of quantum mechanics
\begin{proof}
Assume the existence of an isometry as described. The output of this isometry takes the form $|+{\theta_1}\rangle \otimes |+{\theta_2}\rangle$. Expressing each state in the Hadamard basis, the resulting state can be written as: $|\psi\rangle = |+_{\theta_1} \rangle \otimes |+_{\theta_2} \rangle= a|++\rangle + b |+-\rangle + c|-+\rangle + d |--\rangle$ where the coefficients $a, b, c, d$ satisfy the normalization condition. This implies that, by varying the input parameter $\theta$, the isometry's output spans the full four-dimensional Hilbert space of two qubits: 
\begin{equation}
    D |+_{\theta} \rangle \in \operatorname{span} \{ {|++\rangle , |+-\rangle , |-+\rangle , |--\rangle } \}
\end{equation}
Now, consider an alternative perspective on the output dimensionality. Decompose the input state as $|+_{\theta}\rangle = \alpha |+\rangle + \beta |-\rangle$. Using Equation~\eqref{action on basis} and the linearity of quantum mechanics, the output of the isometry becomes: 
\begin{equation}
    D |+_{\theta} \rangle = D (\alpha |+\rangle + \beta |-\rangle)= \alpha |++\rangle + \beta_2 |-+\rangle + \gamma_2 |--\rangle \in \operatorname{span} \{ {|++\rangle , |-+\rangle , |--\rangle} \}
\end{equation}
where $\beta_2$ and $\gamma_2$ depend on the isometry's parameters.

A contradiction arises between two ways of evaluating the dimensionality of the output. The former suggests the output spans a four-dimensional space, while the latter indicates it is confined to a three-dimensional subspace. This inconsistency demonstrates a conflict between the assumed existence of the isometry in Equation and the linearity of quantum mechanics. 

\noindent Furthermore, the lemma's validity extends beyond inputs of the form $|+_{\theta}\rangle$. Both proofs' mathematical arguments hold even when the input is a generic one-qubit state, rather than a planar state. If we generalize the input condition to an arbitrary state, the lemma's statement depends not on the number of parameters characterizing the input and output states, but on their functional dependence. Specifically, the condition in Equation~\eqref{action on basis} would be modified such that $D |-\rangle = |-\rangle \otimes |\phi\rangle$ replaces the second condition. This alteration again results in a dimensionality reduction, as the basis state $|+-\rangle$ is excluded from the output span.

Moreover, extending the output to multiple registers — e.g. requiring a pair of two-qubit product states (where each may be entangled) — is similarly unattainable. The same dimensionality arguments preclude achieving such an output configuration. Thus, no such isometry exists. 
\end{proof}

\end{document}